\renewcommand\footnotetextcopyrightpermission[1]{} % removes footnote with conference information in first column
\tikzset{>=latex}
\tiny\color{gray},
\newtheorem{remark}{Remark}
\tikzstyle{nnf}=[
\tikzstyle{extnode}=[
\tikzstyle{leafnode}=[
\tikzstyle{constnode}=[
\tikzstyle{label}=[
\tikzstyle{acarrow}=[
\tikzstyle{bnarrow}=[
\tikzstyle{bayesnet}=[
\tikzstyle{bnnode}=[
\tikzstyle{cpt}=[
\tikzstyle{graph}=[
\tikzstyle{node}=[
\tikzstyle{bdd}=[
\tikzstyle{bddnode}=[
\tikzstyle{highedge}=[
\tikzstyle{lowedge}=[
\tikzstyle{bddterminal}=[
\newcommand{\form}[1]{{\mathbf{F}{(#1)}}}
\newcommand{\formp}[1]{{\mathbf{F}'{(#1)}}}
\newcommand{\formpp}[1]{{\mathbf{F}''{(#1)}}}
\newcommand{\indep}{\rotatebox[origin=c]{90}{$\models$}}
\newlength{\pardefault}
\newcommand{\dbracket}[1]{\llbracket {#1} \rrbracket} 
\newcommand{\ts}[0]{\texttt{s}} 
\newcommand{\dippl}[0]{\textsc{dippl}}
\newcommand{\te}[0]{ \texttt{e} }
\newcommand{\true}[0]{\texttt{T}}
\newcommand{\false}[0]{\texttt{F}}
\newcommand{\flip}[0]{\texttt{flip}}
\newcommand{\mods}[0]{\mathtt{Mod}}
\newcommand{\wmc}[0]{\mathtt{WMC}}
\newcommand{\dist}[0]{\mathtt{Dist}}
\newcommand{\sym}[0]{\mathcal{S}}
\def\Pr{\mathop{\rm Pr}\nolimits}
\newcommand{\steven}[1] {}
\newcommand{\todd}[1] {}
\newcommand{\guy}[1] {}
\newcommand{\joe}[1] {}
\begin{document}

%% Title information
\title[]{Symbolic Exact Inference for \\ Discrete Probabilistic Programs}
%% [Short Title] is optional;
%% when present, will be used in
%% header instead of Full Title.
% \titlenote{with title note}
%% \titlenote is optional;
%% can be repeated if necessary;
%% contents suppressed with 'anonymous'
% \subtitle{Subtitle}                     %% \subtitle is optional
% \subtitlenote{with subtitle note}       %% \subtitlenote is optional;
                                        %% can be repeated if necessary;
                                        %% contents suppressed with 'anonymous'

%% Author information
%% Contents and number of authors suppressed with 'anonymous'.
%% Each author should be introduced by \author, followed by
%% \authornote (optional), \orcid (optional), \affiliation, and
%% \email.
%% An author may have multiple affiliations and/or emails; repeat the
%% appropriate command.
%% Many elements are not rendered, but should be provided for metadata
%% extraction tools.

%% Author with single affiliation.
\author{Steven Holtzen}
\affiliation{
  \institution{UCLA}            %% \institution is required
}
\email{sholtzen@cs.ucla.edu}          %% \email is recommended

\author{Todd Millstein}
\affiliation{
  \institution{UCLA}            %% \institution is required
}
\email{todd@cs.ucla.edu}          %% \email is recommended
\author{Guy Van den Broeck}
\affiliation{
  \institution{UCLA}            %% \institution is required
}
\email{guyvdb@cs.ucla.edu}          %% \email is recommended

%% Abstract
%% Note: \begin{abstract}...\end{abstract} environment must come
%% before \maketitle command
\begin{abstract}
  The computational burden of probabilistic inference remains a hurdle for
applying probabilistic programming languages to practical problems of interest.
In this work, we provide a semantic and algorithmic foundation for efficient
exact inference on discrete-valued finite-domain imperative probabilistic
programs. We leverage and generalize efficient inference procedures for Bayesian
networks, which exploit the structure of the network to decompose the inference
task, thereby avoiding full path enumeration. To do this, we first compile
probabilistic programs to a symbolic representation. Then we adapt techniques
from the probabilistic logic programming and artificial intelligence communities
in order to perform inference on the symbolic representation. We formalize our
approach, prove it sound, and experimentally validate it against existing exact
and approximate inference techniques. We show that our inference approach is
competitive with inference procedures specialized for Bayesian networks, thereby
expanding the class of probabilistic programs that can be practically analyzed.
\end{abstract}

%% 2012 ACM Computing Classification System (CSS) concepts
%% Generate at 'http://dl.acm.org/ccs/ccs.cfm'.
% \begin{CCSXML}
% <ccs2012>
% <concept>
% <concept_id>10011007.10011006.10011008</concept_id>
% <concept_desc>Software and its engineering~General programming languages</concept_desc>
% <concept_significance>500</concept_significance>
% </concept>
% <concept>
% <concept_id>10003456.10003457.10003521.10003525</concept_id>
% <concept_desc>Social and professional topics~History of programming languages</concept_desc>
% <concept_significance>300</concept_significance>
% </concept>
% </ccs2012>
% \end{CCSXML}

% \ccsdesc[500]{Software and its engineering~General programming languages}
% \ccsdesc[300]{Social and professional topics~History of programming languages}
%% End of generated code

%% Keywords
%% comma separated list
% \keywords{keyword1, keyword2, keyword3}  %% \keywords are mandatory in final camera-ready submission

%% \maketitle
%% Note: \maketitle command must come after title commands, author
%% commands, abstract environment, Computing Classification System
%% environment and commands, and keywords command.
\maketitle

\section{Introduction}
When it is computationally feasible, exact probabilistic inference is vastly
preferable to approximation techniques. Exact inference methods are
deterministic and reliable, so they can be trusted for making high-consequence
decisions and do not propagate errors to subsequent analyses. Ideally, one would
use exact inference \emph{whenever possible}, only resorting to approximation
when exact inference strategies become infeasible. Even when approximating
inference, one often performs exact inference in an approximate model. This is
the case for a wide range of approximation schemes, including message
passing~\cite{ChoiDarwiche11,sontag2011introduction}, sampling~\cite{gogate2011samplesearch,FriedmanNIPS18}, and variational
inference~\cite{wainwright2008graphical}.

Existing probabilistic programming systems lag behind state-of-the-art
techniques for performing exact probabilistic inference in other domains such as
graphical models. Fundamentally, inference -- both exact and approximate -- is
theoretically hard \cite{roth1996hardness}. However, exact inference is
routinely performed in practice. This is because many interesting inference
problems have \emph{structure}: there are underlying repetitions and
decompositions that can be exploited to perform inference more efficiently than
the worst case. Existing efficient exact inference procedures -- notably
techniques from the graphical models inference community -- systematically
find and exploit the underlying structure of the problem in order to mitigate
the inherent combinatorial explosion problem of exact probabilistic inference
\cite{boutilier1996context, koller2009probabilistic, Pearl1988PRIS}.

We seek to close the performance gap between exact inference in
discrete graphical models and discrete-valued finite-domain probabilistic
programs. The key idea behind existing state-of-the-art inference procedures in
discrete graphical models is to compile the graphical model into a
representation known as a \emph{weighted Boolean formula} (WBF), which is a
symbolic representation of the joint probability distribution over the graphical
model's random variables. This symbolic representation exposes key structural
elements of the distribution, such as independences between random variables.
Then, inference is performed via a weighted sum of the models of the WBF, a
process known as \emph{weighted model counting} (WMC). This WMC process exploits
the independences present in the WBF and is competitive with state-of-the-art
inference techniques in many domains, such as probabilistic logic programming,
Bayesian networks, and probabilistic databases~\cite{chavira2006compiling,
  Fierens11, VdBFTDB17, VlasselaerAIJ16, chavira2006compiling, ChoiLFU17}.

First we give a motivating example that highlights key properties of our
approach. Then, we describe our symbolic compilation in more detail; the precise
details of our compilation, and its proof of correctness, can be found in the
appendix. Then, we illustrate how to use binary decision diagrams to represent
the probability distribution of a probabilistic program for efficient inference.
Finally, we provide preliminary experimental results illustrating the promise of
this approach on several challenging probabilistic programs.

\section{Exact Symbolic Inference}
\label{sec:motivating_example}

\begin{figure}
  \centering
   \begin{subfigure}[B]{1.0\linewidth}
\begin{lstlisting}[mathescape=true]
$x \sim$flip$_x(0.5)$;
if($x$) { $y \sim$flip$_1(0.6)$ } 
else { $y \sim$flip$_2(0.4)$ };
if($y$) { $z \sim$flip$_3(0.6)$ } 
else { $z \sim$flip$_4$(0.9) }
\end{lstlisting}
     \caption{A simple probabilistic program. The notation $x \sim
       \texttt{flip}_l(\theta)$ denotes drawing a sample from a
       Bernoulli$(\theta)$ distribution and assigning the outcome to the
       variable $x$.  The label $l$ is not actually part of the syntax but is used so we can refer to each \texttt{flip} uniquely.}
     \label{fig:motiv_ex_prog}
   \end{subfigure}\\
   
   \begin{subfigure}[B]{1.0\linewidth}
     \centering
         \begin{tikzpicture}
      
      \def\lvl{22pt}

      \node (fx) at (0,0) [bddnode] {$f_x$};
      \node (xt) at ($(fx) + (-30bp, -\lvl)$) [bddnode] {$x$};
      \node (xf) at ($(fx) + (30bp, -\lvl)$) [bddnode] {$x$};

      \node (falsex) at ($(fx) + (0, -1.5*\lvl)$) [bddterminal] {$\false$};
      \node (fy1) at ($(xt) + (0, -\lvl)$) [bddnode] {$f_1$};
      \node (fy2) at ($(xf) + (0, -\lvl)$) [bddnode] {$f_2$};

      \node (yt) at ($(fy1) + (0, -\lvl)$) [bddnode] {$y$};
      \node (yf) at ($(fy2) + (0, -\lvl)$) [bddnode] {$y$};

      \node (falsey) at ($(falsex) + (0, -2*\lvl)$) [bddterminal] {$\false$};
      \node (fz1) at ($(yt) + (0, -\lvl)$) [bddnode] {$f_3$};
      \node (fz2) at ($(yf) + (0, -\lvl)$) [bddnode] {$f_4$};

      \node (zt) at ($(fz1) + (0, -\lvl)$) [bddnode] {$z$};
      \node (zf) at ($(fz2) + (0, -\lvl)$) [bddnode] {$z$};

      \node (falsez) at ($(zf) + (0, -\lvl)$) [bddterminal] {$\false$};
      \node (truez) at ($(zt) + (0, -\lvl)$) [bddterminal] {$\true$};
    \begin{scope}[on background layer]
      \draw [highedge] (fx) -- (xt);
      \draw [lowedge] (fx) -- (xf);

      \draw [lowedge] (xt) -- (falsex);
      \draw [highedge] (xf) -- (falsex);
      \draw [highedge] (xt) -- (fy1);
      \draw [lowedge] (xf) -- (fy2);

      \draw [highedge] (fy1) -- (yt);
      \draw [lowedge] (fy1) -- (yf);
      \draw [highedge] (fy2) -- (yt);
      \draw [lowedge] (fy2) -- (yf);
      
      \draw [lowedge] (yt) -- (falsey);
      \draw [highedge] (yf) -- (falsey);
      \draw [highedge] (yt) -- (fz1);
      \draw [lowedge] (yf) -- (fz2);

      \draw [highedge] (fz1) -- (zt);
      \draw [lowedge] (fz1) -- (zf);
      \draw [highedge] (fz2) -- (zt);
      \draw [lowedge] (fz2) -- (zf);
      
      \draw [lowedge] (zt) -- (falsez);
      \draw [highedge] (zf) -- (falsez);
      \draw [highedge] (zt) -- (truez);
      \draw [lowedge] (zf) -- (truez);
    \end{scope}
    \end{tikzpicture}
    \caption{A binary decision diagram representing the Boolean formula compiled
      from the program in Figure~\ref{fig:motiv_ex_prog}; a low edge is denoted a dashed
      line, and a high edge is denoted with a solid line. The variables $f_x$, $f_1$,
      $f_2$, $f_3$, and $f_4$ correspond to annotations in Figure~\ref{fig:motiv_ex_prog}.}
    \label{fig:motiv_bdd}
   \end{subfigure}\\

   \caption{Probabilistic program and its symbolic representation.}
   \label{fig:motiv_ex_fig}
   \vspace{-0.5cm}
\end{figure}
In this section we present a motivating example that highlights key elements of
our approach. Figure~\ref{fig:motiv_ex_prog} shows a simple probabilistic
program that encodes a linear Bayesian network, a structure known as a
\emph{Markov chain} \cite{koller2009probabilistic}. In order to perform
inference efficiently on a Markov chain -- or any Bayesian network -- it is
necessary to exploit the \emph{independence structure} of the model. Exploiting
independence is one of the key techniques for efficient graphical model
inference procedures. Markov chains encoded as probabilistic programs have $2^n$
paths, where $n$ is the length of the chain. Thus, inference methods that rely
on exhaustively exploring the paths in a program -- a strategy we refer to as
\emph{path-based} inference methods -- will require exponential time in the
length of the Markov chains; see our experiments in Figure~\ref{fig:exp_chain}.
Path-based inference is currently a common strategy for performing discrete
exact inference in the probabilistic programming
community~\citep{Sankaranarayanan2013, Albarghouthi2017_1, gehr2016psi}.

However, it is well known that Markov chains support linear-time inference in
the length of the chain \cite{koller2009probabilistic}. The reason for this is
that the structure of a Markov chain ensures a strong form of {\em conditional
independence}: each node in the chain depends only on the directly preceding
node in the chain. In the program of Figure~\ref{fig:motiv_ex_prog}, for
example, the probability distribution for $z$ is independent of $x$ \emph{given}
$y$, i.e., if $y$ is fixed to a particular value, then the probability
distribution over $z$ can be computed without considering the distribution over
$x$. Therefore inference can be \emph{factorized}: the probability distribution
for $y$ can be determined as a function of that for $x$, and then the
probability distribution for $z$ can be determined as a function of that for
$y$. More generally, inference for a chain of length $n$ can be reduced to
inference on $n-1$ separate chains, each of length two.

To close this performance gap between Bayesian networks and exact PPL inference,
we leverage and generalize state-of-the-art techniques for Bayesian inference,
which represent the distribution \emph{symbolically}
\cite{Chavira2008,Fierens11}. In this style, the Bayesian network is compiled to
a Boolean function and represented using a binary decision diagram (BDD) or
related data structure \cite{Darwiche2002}. The BDD structure
directly exploits conditional independences -- as well as other forms
of independence -- by caching and re-using duplicate sub-functions during
compilation \cite{Akers1978BinaryDD}.

In this paper we describe an algorithm for compiling a probabilistic program to
a Boolean formula, which can then be represented by a BDD. As an example,
Figure~\ref{fig:motiv_bdd} shows a BDD representation of the program in
Figure~\ref{fig:motiv_ex_prog}. The outcome of each \texttt{flip}$_l(\theta)$
expression in the program is encoded as a Boolean variable labeled $f_l$. A
\emph{model} of the BDD is a truth assignment to all the variables in the BDD
that causes the BDD to return {\tt T}, and each model of the BDD in
Figure~\ref{fig:motiv_bdd} represents a possible execution of the original
program.

The exploitation of the conditional independence structure of the program is
clearly visible in the BDD. For example, any feasible execution in which $y$ is
true has the same sub-function for $z$ --- the subtree rooted at $f_3$ ---
regardless of the value of $x$. The same is true for any feasible execution in
which $y$ is false. More generally, the BDD for a Markov chain has size linear
in the length of the chain, despite the exponential number of possible execution
paths.

To perform inference on this BDD, we first associate a \emph{weight} with each
truth assignment to each variable: the variables $x,y,$ and $z$ are given a
weight of 1 for both the true and false assignments, and the
\texttt{flip}$(\theta)$ variables are given a weight of $\theta$ and $1-\theta$
for their true and false assignments respectively. The Boolean formula together
with these weights is called a \emph{weighted Boolean formula}.

Finally, we can perform inference on the original probabilistic program relative
to a given {\em inference query} (e.g., ``What is the probability that $z$ is
false?'') via \emph{weighted model counting} (WMC). The weight of a model of the
BDD is defined as the product of the weights of each variable assignment in the
model, and the WMC of a set of models is the sum of the weights of the models.
Then the answer to a given inference query $Q$ is simply the WMC of all models
of the BDD that satisfy the query. WMC is a well-studied general-purpose
technique for performing probabilistic inference and is currently the
state-of-the-art technique for inference in discrete Bayesian networks,
probabilistic logic programs, and probabilistic databases \cite{Chavira2008,
Fierens11, VdBFTDB17}. BDDs support linear-time weighted model counting by
performing a single bottom-up pass of the diagram \cite{Darwiche2002}: thus,
we can compile a single BDD for a probabilistic program, which can be used to
exactly answer many inference queries.

\section{The \dippl{} Language}
\label{sec:lang}
Here we formally define the syntax and semantics of our discrete finite-domain imperative
probabilistic programming language \dippl{} language. First we will introduce
and discuss the syntax. Then, we will describe the semantics and its basic
properties. For more details on the semantics, see the appendix.

\subsection{Syntax}

\begin{figure}
  \begin{lstlisting}[mathescape=true]
s ::=
  | s; s
  | x := e
  | x $\sim$ flip($\theta$)
  | if e { s } else { s }
  | observe(e)
  | skip
e :: =
  | $x$
  | $\true$ | $\false$
  | e $\lor$ e
  | e $\land$ e
  | $\neg$ e
\end{lstlisting}
\caption{Syntax of \dippl{}.}
\label{fig:syntax}
\end{figure}

% explain how conditioning on zero probability event yields bottom

Figure~\ref{fig:syntax} gives the syntax of our probabilistic programming
language \dippl{}. Metavariable $x$ ranges over variable names, and
metavariable $\theta$ ranges over rational numbers in the interval $[0,1]$. All
data is Boolean-valued, and expressions include the usual Boolean operations,
though it is straightforward to extend the language to other finite-domain
datatypes. In addition to the standard loop-free imperative statements, there
are two probablistic statements. The statement {\tt x $\sim$ flip($\theta$)}
samples a value from the Bernoulli distribution defined by parameter $\theta$
(i.e., $\true$ with probability $\theta$ and $\false$ with probability
$1-\theta$). The statement {\tt observe(e)} conditions the current distribution
of the program on the event that $\te$ evaluates to true.

\subsection{Semantics}
The goal of the semantics of any probabilistic programming language is to define
the distribution over which one wishes to perform inference. In this section, we
introduce a denotational semantics that directly produces this distribution of
interest, and it is defined over program states.
A state $\sigma$ is a finite map from variables to Boolean values, and $\Sigma$
is the set of all possible states. 

We define a denotational semantics for \dippl{}, which we call its
\emph{transition semantics} and denote $\dbracket{\cdot}_T$. These semantics are
given in the appendix. The transition
semantics will be the primary semantic object of interest for \dippl{}, and will
directly produce the distribution over which we wish to perform inference.
For some statement $\ts$, the transition semantics is written
$\dbracket{\ts}_T(\sigma' \mid \sigma)$, and it computes the (normalized) conditional
probability upon executing $\ts$ of transitioning to state $\sigma'$. The
transition semantics have the following type signature:
\begin{align*}
  \dbracket{\ts}_T : \Sigma \rightarrow \dist~\Sigma
\end{align*}
where $\dist~\Sigma$ is the set of all probability distributions over $\Sigma$.
For example,
\begin{align*}
          \dbracket{x\sim \flip(\theta)}_T(\sigma' \mid \sigma)
        \triangleq &
                     \begin{cases}
                       \theta \quad& \text{if } \sigma' = \sigma[x \mapsto \true]\\
                       1 - \theta \quad& \text{if } \sigma' = \sigma[x \mapsto \false]\\
                       0 \quad& \text{otherwise}
                     \end{cases}
\end{align*}

Ultimately, our goal during inference is to compute the probability of some
event occurring in the probability distribution defined by the transition
semantics of the program.

\section{Symbolic Compilation for Inference}
\label{sec:symcomp}

% \todd{It will be easy to motivate this, since it's similar to the situation in
% many other settings involving program reasoning. The story is: A direct
% implementation of the semantics in the previous section requires enumerating all
% possible executions of the program starting at $\sigma$ (that end at $\sigma'$).
% Here we describe a compilation strategy that avoids the need for explicit
% enumeration. We compile the program to a weighted Boolean formula and then can
% use symbolic techniques, in particular weighted model counting, to perform
% inference, leveraging modern WMC solvers.} A key limitation of the denotational
% semantics of the previous section is that implementing them as described is
% inefficient. 
Existing approaches to exact inference for imperative PPLs perform {\em path enumeration}:
each execution path is individually analyzed to determine the probability mass
along the path, and the probability masses of all paths are summed. 
% Indeed, this
% approach is similar to what our semantics in the previous section does. 
As
argued earlier, such approaches are inefficient due to the need to enumerate
complete paths and the inability to take advantage of key properties of the
probability distribution across paths, notably forms of independence.

In this section we present an alternative approach to exact inference for PPLs,
which is inspired by state-of-the-art techniques for exact inference in Bayesian
networks \cite{Chavira2008}. We describe how to compile a probabilistic program
to a \emph{weighted Boolean formula}, which symbolically represents the program
as a relation between input and output states.  % This process is similar to model checking a Boolean program, as in \textsc{slam} \cite{Ball2000}.
Inference is then reduced to performing a \emph{weighted model count} (WMC) on
this formula, which can be performed efficiently using BDDs and related data
structures.

\subsection{Weighted Model Counting}
Weighted model counting is a well-known general-purpose technique for performing
probabilistic inference in the artificial intelligence and probabilistic logic
programming communities, and it is currently the state-of-the-art technique for
performing inference in certain classes of Bayesian networks and probabilistic
logic programs \cite{Chavira2008,
Fierens11, VdBFTDB17, sang2005performing}. There exist a variety of
general-purpose black-box tools for performing weighted model counting, similar to
satisfiability solvers \cite{OztokD15, OztokD14b, muise2010fast}. \steven{Guy --
  Maybe you can comment here more on the overall impact of WMC?} 

First, we give basic definitions from propositional logic. A \emph{literal} is
either a Boolean variable or its negation. For a
formula $\varphi$ over variables $V$, a sentence $\omega$ is a \emph{model} of
$\varphi$ if it is a conjunction of literals, contains every variable in $V$,
and $\omega \models \varphi$. We denote the set of all models of $\varphi$ as
$\mods(\varphi)$. Now we are ready to define a weighted Boolean formula:

\begin{definition}[Weighted Boolean Formula]
Let $\varphi$ be a Boolean formula, $L$ be the set of all literals for variables
that occur in $\varphi$, and $w : L \rightarrow \mathbb{R}^+$ be a function
that associates a real-valued positive weight with each literal $l \in L$. The
pair $(\varphi, w)$ is a \emph{weighted Boolean formula} (WBF).
\end{definition}

Next, we define the weighted model counting task, which computes a weighted sum
over the models of a weighted Boolean formula:

\begin{definition}[Weighted Model Count]
  Let $(\varphi, w)$ be a weighted Boolean formula. Then, the \emph{weighted
model count} ($\wmc$) of $(\varphi, w)$ is defined as:
\begin{align}
  \wmc(\varphi, w) \triangleq \sum_{\omega \in \mods(\varphi)} \prod_{l \in \omega} w(l)
\end{align}
where the set $l \in \omega$ is the set of all literals in the model $\omega$.
\end{definition}

The process of symbolic compilation associates a \textsc{dippl} program with a
weighted Boolean formula and is described next. 
% See the appendix for details on the \textsc{dippl}
% language, the symbolic compilation procedure, and the proof of correctness.

\subsection{Symbolic Compilation}
We formalize symbolic compilation of a \dippl{}
program to a weighted Boolean formula as a relation denoted $\ts \rightsquigarrow (\varphi,
w)$. The formal rules for this relation are
described in detail in the
appendix; here we describe the important properties of this compilation.
Intuitively, the formula $\varphi$ produced by the compilation represents
the program $\ts$ as a relation between initial states and final states, where
initial states are represented by unprimed Boolean variables $\{x_i\}$ and final
states are represented by primed Boolean variables $\{x_i'\}$. 
These compiled weighted Boolean formulas will have a probabilistic semantics
that allow them to be interpreted as a transition
probability for the original statement.

Our goal is to ultimately give a correspondence between the compiled weighted
Boolean formula and the original denotational semantics of the statement.
First we define the translation of a state $\sigma$ to a logical formula:
\begin{definition}[Boolean state]
  Let $\sigma \!=\! \{(x_1,b_1),\ldots,(x_n,b_n)\}$.  We define the \emph{Boolean
state} $\form{\sigma}$ as $l_1 \wedge \ldots \wedge l_n$ where for each $i$,
$l_i$ is $x_i$ if $\sigma(x_i)=\true$ and $\neg x_i$ if $\sigma(x_i)=\false$.
For convenience, we also define a version that relabels state
variables to their primed versions, $\formp{\sigma} \triangleq \form{\sigma}[x_i
\mapsto x_i']$.
\end{definition}

Now, we formally describe how every compiled weighted Boolean formula can be
interpreted as a conditional probability by computing the appropriate weighted
model count:
\begin{definition}[Compiled semantics] \label{def:wmc-semantics}
  Let $(\varphi, w)$ be a weighted Boolean formula, and let $\sigma$ and $\sigma'$ be states.
  % with free variables $\{x_i\}$ and $\{x_i'\}$, and let $\sigma$ and $\sigma'$ be states of $\{x_i\}$ respectively. 
  Then, the \emph{transition semantics} of
$(\varphi, w)$ is defined:
  \begin{align}
    \dbracket{(\varphi, w)}_T(\sigma' \mid \sigma) \triangleq
    \frac{\wmc(\varphi \land \form{\sigma} \land \formp{\sigma'}, w)}
    {\wmc(\varphi \land \form{\sigma}, w)} \label{eq:wmc-semantics}
  \end{align}
\end{definition}
Moreover, the transition semantics of Definition~\ref{def:wmc-semantics} allows for more general queries to be phrased as WMC tasks as well. For example, the probability of some event $\alpha$ being true in the output state $\sigma'$ can be computed by replacing $\formp{\sigma'}$ in Equation~\ref{eq:wmc-semantics} by a Boolean formula for $\alpha$.

% \steven{
%   TODO -- We need to fix the issue with fresh variables in the above definition.
% }

Finally, we state our correctness theorem, which describes the relation between
the semantics of the compiled WBF to the denotational
semantics of \dippl{}:
\begin{theorem}[Correctness of Compilation Procedure]
  \label{thm:correctness}
  Let $\ts$ be a \dippl{} program, $V$ be the set of all variables in $\ts$,
and  $\ts \rightsquigarrow (\varphi, w)$. Then for all states $\sigma$ and
$\sigma'$ over the variables in $V$, we have:
\begin{align}
    \dbracket{\ts}_T(\sigma' \mid \sigma) = \dbracket{(\varphi, w)}_T(\sigma' \mid \sigma)
      \end{align}
\end{theorem}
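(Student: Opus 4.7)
The plan is to prove the theorem by structural induction on the derivation of $\ts \rightsquigarrow (\varphi, w)$, matching each rule of the compilation relation against the corresponding clause of the denotational transition semantics $\dbracket{\cdot}_T$. For each case the goal reduces to verifying that
\[
  \dbracket{\ts}_T(\sigma' \mid \sigma) \;=\; \frac{\wmc(\varphi \land \form{\sigma} \land \formp{\sigma'}, w)}{\wmc(\varphi \land \form{\sigma}, w)}
\]
holds for every pair of states $\sigma,\sigma'$ over $V$.

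For the base cases I would check the atomic statements directly. For \texttt{skip}, the compiled formula asserts $x_i' \leftrightarrow x_i$ for every $x_i \in V$ with unit weight on each literal, so both numerator and denominator pick out a single satisfying model when $\sigma' = \sigma$ and vanish otherwise, matching the identity transition. For $x := e$, the formula asserts $x' \leftrightarrow \form{e}$ and $y' \leftrightarrow y$ for the other variables, and the WMC ratio again collapses to a Kronecker delta at the updated state. For $x \sim \flip_l(\theta)$, compilation introduces a fresh Boolean $f_l$ with $w(f_l) = \theta$, $w(\neg f_l) = 1 - \theta$, and encodes $x' \leftrightarrow f_l$; summing over the two assignments to $f_l$ reproduces the Bernoulli semantics shown in the excerpt. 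For \texttt{observe}$(e)$, the formula conjoins $\form{e}$ and otherwise acts as \texttt{skip}; the denominator absorbs exactly the normalization mass of the non-satisfying states, yielding the required conditioning.

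The two interesting inductive cases are conditionals and sequencing. For \texttt{if}~$e$~\texttt{\{}~$\ts_1$~\texttt{\}}~\texttt{else}~\texttt{\{}~$\ts_2$~\texttt{\}}, the compiled formula has the shape $(\form{e} \land \varphi_1) \lor (\neg \form{e} \land \varphi_2)$; since $\form{\sigma}$ fixes the truth value of $e$, exactly one disjunct survives in both numerator and denominator, and the induction hypothesis on the surviving branch closes the case. For $\ts_1;\ts_2$, the compiled formula existentially relates the primed outputs of $\ts_1$ to the unprimed inputs of $\ts_2$ through fresh intermediate variables; the key step is to show the factorization
\[
  \wmc(\varphi \land \form{\sigma} \land \formp{\sigma'}, w)
  \;=\; \sum_{\sigma''} \wmc(\varphi_1 \land \form{\sigma} \land \formpp{\sigma''}, w_1)\cdot \wmc(\varphi_2 \land \formpp{\sigma''}[x''_i\mapsto x_i] \land \formp{\sigma'}, w_2),
\]
after which the induction hypothesis together with the Chapman--Kolmogorov identity implicit in $\dbracket{\ts_1;\ts_2}_T$ finishes the case.

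The main obstacle I anticipate is the sequencing case, specifically justifying the factorization above. It requires an auxiliary lemma stating that the free variables of $\varphi_1$ and $\varphi_2$ partition cleanly along the intermediate state, that the weight function $w$ decomposes as a product $w_1 \cdot w_2$ over those disjoint supports, and that this decomposition survives the conjunction with $\form{\sigma}$ in the denominator so that conditioning on the initial state does not entangle the two subformulas. Once this bookkeeping lemma about disjoint WMC supports is in place, the remaining inductive cases follow by straightforward unfolding of the compilation rules and the transition semantics.
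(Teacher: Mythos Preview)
Your overall induction strategy matches the paper's, but there is a genuine gap in the sequencing case that would prevent the proof from closing.

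You describe the denotational semantics of $\ts_1;\ts_2$ as a ``Chapman--Kolmogorov identity,'' but it is not. The paper's transition semantics for sequencing is
\[
  \dbracket{\ts_1;\ts_2}_T(\sigma' \mid \sigma)
  \;=\;
  \frac{\sum_{\tau} \dbracket{\ts_1}_T(\tau \mid \sigma)\,\dbracket{\ts_2}_T(\sigma' \mid \tau)\,\dbracket{\ts_2}_A(\tau)}
       {\sum_{\tau} \dbracket{\ts_1}_T(\tau \mid \sigma)\,\dbracket{\ts_2}_A(\tau)},
\]
which carries an explicit weighting by the \emph{accepting} semantics $\dbracket{\ts_2}_A(\tau)$, the probability that $\ts_2$ does not violate an \texttt{observe} when started in $\tau$. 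Your factorization of the WMC numerator and denominator is correct in spirit, and after applying the induction hypothesis for the transition semantics it reduces to exactly the expression above \emph{except} with $\wmc(\varphi_2 \land \form{\tau}, w_2)$ in place of $\dbracket{\ts_2}_A(\tau)$. To identify these two quantities you need a second induction hypothesis, namely
\[
  \dbracket{\ts}_A(\sigma) \;=\; \wmc(\varphi \land \form{\sigma}, w),
\]
proved simultaneously with the transition claim. The paper strengthens the theorem in exactly this way (see the appendix statement and the mutual induction in Appendix~\ref{sec:app_correctness}); without it, the sequencing case does not go through, because the $\dbracket{\ts_2}_A$ factor is what makes \texttt{observe} in $\ts_2$ retroactively reweight the intermediate states produced by $\ts_1$. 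Your treatment of \texttt{observe} as a base case (``the denominator absorbs the normalization mass'') also misreads the semantics: at the base level \texttt{observe} does no normalization---the ratio is $1/1$, $0/1$, or $0/0$---and all the renormalization work happens in the sequencing rule via $\dbracket{\cdot}_A$.
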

\begin{proof}
  A complete proof can be found in Appendix~\ref{sec:app_correctness}.
\end{proof}

The above theorem allows us to perform inference via weighted model
counting on the compiled WBF for a \dippl{} program. See the appendix for
details on this compilation procedure, and proof of its correctness.

\section{Efficient Inference}
\label{sec:efficient_inf}
Inference is theoretically hard \cite{roth1996hardness}. Exploiting the
structure of the problem -- and in particular, exploiting various forms of
independence -- are essential for scalable and practical inference procedures
\cite{boutilier1996context, koller2009probabilistic, Pearl1988PRIS}. In this
section, we will represent a compiled weighted Boolean formula as a binary
decision diagram (BDD). We will show how BDDs implicitly exploit the problem
structure.

\subsection{BDD Representation}
BDDs are a popular choice for representing the set of reachable states in the
symbolic model checking community \citep{Ball2000}. BDDs support a variety of
useful properties which make them suitable for this task: they support an array
of pairwise composition operations, including conjunction, disjunction,
existential quantification and variable relabeling. These composition operators
are \emph{efficient}, i.e. performing them requires time polynomial in the sizes
of the two BDDs that are being composed. 

In addition to supporting efficient compositional operators, BDDs also support a
variety of efficient queries, including satisfiability and weighted model
counting \cite{Darwiche2002}.

% \paragraph{BDD Variable Ordering}
% The variable order is an essential component to determine the size of the BDD
% representation \cite{Akers1978BinaryDD}. In this work, will utilize a
% statically-determined variable ordering which is driven by the order in which
% variables and flip statements appear in the data-flow graph of the program.
% There are three adjacent BDD variables introduced for each program variable:
% $x$, $x'$, and $x''$ (although note that variables $x''$ are not included in any
% final WMC computation). Each \texttt{flip} statement in the program is
% represented by a single fresh BDD variable in the order.

\subsection{Exploiting Program Structure}
\begin{figure}
    \begin{subfigure}[b]{\linewidth}
\begin{lstlisting}[mathescape=true]
$x \sim$flip$_1(0.6)$;
$y \sim$flip$_2(0.7)$
\end{lstlisting}
    \caption{A probabilistic program illustrating independence between variables
    $x$ and $y$.}
    \label{fig:indep_ex}
  \end{subfigure}\\
  \begin{subfigure}[b]{\linewidth}
    \centering
    \begin{tikzpicture}
      
    \def\lvl{18pt}
    \node (fx) at (0bp,0bp) [bddnode] {$f_1$};

    \node (xt) at ($(fx) + (-30bp, -\lvl)$) [bddnode] {$x$};
    \node (xf) at ($(fx) + (30bp, -\lvl)$) [bddnode] {$x$};
    
    \node (falsex) at ($(fx) + (0bp, -2*\lvl)$) [bddterminal] {$\false$};

    \node (fy) at ($(falsex) + (0bp, -\lvl)$) [bddnode] {$f_2$};

    \node (yt) at ($(fy) + (-30bp, -\lvl)$) [bddnode] {$y$};
    \node (yf) at ($(fy) + (30bp, -\lvl)$) [bddnode] {$y$};

    \node (truey) at ($(yt) + (0bp, -\lvl)$) [bddterminal] {$\true$};
    \node (falsey) at ($(yf) + (0bp, -\lvl)$) [bddterminal] {$\false$};

    \begin{scope}[on background layer]
      \draw [highedge] (fx) -- (xt);
      \draw [lowedge] (fx) -- (xf);

      \draw [lowedge] (xt) -- (falsex);
      \draw [highedge] (xf) -- (falsex);
      \draw[highedge] (xt) -- (fy);
      \draw[lowedge] (xf) -- (fy);

      \draw [highedge] (fy) -- (yt);
      \draw [lowedge] (fy) -- (yf);

      \draw [lowedge] (yt) -- (falsey);
      \draw [highedge] (yf) -- (falsey);
      \draw[highedge] (yt) -- (truey);
      \draw[lowedge] (yf) -- (truey);

    \end{scope}

    \end{tikzpicture}
    \caption{A BDD representing the logical formula compiled from the program in
      Figure~\ref{fig:indep_ex}. The variables $f_1$ and $f_2$ correspond to the
      \texttt{flip} statements on lines 1 and 2 respectively.
    }
    \label{fig:indep_bdd}
  \end{subfigure}\\

  \begin{subfigure}[b]{\linewidth}
 \begin{lstlisting}[mathescape=true]
$z \sim$flip$_1(0.5)$;
if($z$) {
  $x \sim$flip$_2(0.6)$;
  $y \sim$flip$_3(0.7)$
} else {
  $x \sim$flip$_4(0.4)$;
  y := x
}
\end{lstlisting}
     \caption{A probabilistic program illustrating the context-specific
       independence between $x$ and $y$ given $z = \true$.}
     \label{fig:sens_ex}
   \end{subfigure}\\

   \begin{subfigure}[b]{\linewidth}
     \centering
    \begin{tikzpicture}
      
      \def\lvl{18pt}

    \node (fz) at (0, 0) [bddnode] {$f_1$};

    \node (zt) at ($(fz) + (-40bp, -\lvl)$) [bddnode] {$z$};
    \node (zf) at ($(fz) + (40bp, -\lvl)$) [bddnode] {$z$};
    
    \node (falsez) at ($(fx) + (0bp, -2*\lvl)$) [bddterminal] {$\false$};

    % right bdd
    \node (fx2) at ($(zf) + (0, -\lvl)$) [bddnode] {$f_4$};
    \node (xt2) at ($(fx2) + (-20bp, -\lvl)$) [bddnode] {$x$};
    \node (xf2) at ($(fx2) + (20bp, -\lvl)$) [bddnode] {$x$};

    \node (falsex2) at ($(fx2) + (0bp, -2*\lvl)$) [bddterminal] {$\false$};

    \node (yt2) at ($(xt2) + (0, -2*\lvl)$) [bddnode] {$y$};
    \node (yf2) at ($(xf2) + (0, -2*\lvl)$) [bddnode] {$y$};

    \node (rightf) at ($(yf2) + (0, -\lvl)$) [bddterminal] {$\false$};
    \node (rightt) at ($(yt2) + (0, -\lvl)$) [bddterminal] {$\true$};
    
    % indep bdd
    \node (fx1) at ($(zt) + (0, -\lvl)$) [bddnode] {$f_2$};

    \node (xt) at ($(fx1) + (-20bp, -\lvl)$) [bddnode] {$x$};
    \node (xf) at ($(fx1) + (20bp, -\lvl)$) [bddnode] {$x$};
    
    \node (falsex) at ($(fx1) + (0bp, -2*\lvl)$) [bddterminal] {$\false$};

    \node (fy) at ($(falsex) + (0bp, -\lvl)$) [bddnode] {$f_3$};

    \node (yt) at ($(fy) + (-20bp, -\lvl)$) [bddnode] {$y$};
    \node (yf) at ($(fy) + (20bp, -\lvl)$) [bddnode] {$y$};

    \node (truey) at ($(yt) + (0bp, -\lvl)$) [bddterminal] {$\true$};
    \node (falsey) at ($(yf) + (0bp, -\lvl)$) [bddterminal] {$\false$};

    \begin{scope}[on background layer]
      \draw [highedge] (fz) -- (zt);
      \draw [lowedge] (fz) -- (zf);

      \draw [highedge] (zt) -- (fx1);
      \draw [lowedge] (zt) -- (falsez);
      \draw [lowedge] (zf) -- (fx2);
      \draw [highedge] (zf) -- (falsez);

      \draw [highedge] (fx2) -- (xt2);
      \draw [lowedge] (fx2) -- (xf2);

      \draw [highedge] (xt2) -- (yt2);
      \draw [lowedge] (xt2) -- (falsex2);
      \draw [lowedge] (xf2) -- (yf2);
      \draw [highedge] (xf2) -- (falsex2);

      \draw [highedge] (yt2) -- (rightt);
      \draw [lowedge] (yt2) -- (rightf);
      \draw [lowedge] (yf2) -- (rightt);
      \draw [highedge] (yf2) -- (rightf);
      
      % indep bdd
      \draw [highedge] (fx1) -- (xt);
      \draw [lowedge] (fx1) -- (xf);

      \draw [lowedge] (xt) -- (falsex);
      \draw [highedge] (xf) -- (falsex);
      \draw[highedge] (xt) -- (fy);
      \draw[lowedge] (xf) -- (fy);

      \draw [highedge] (fy) -- (yt);
      \draw [lowedge] (fy) -- (yf);

      \draw [lowedge] (yt) -- (falsey);
      \draw [highedge] (yf) -- (falsey);
      \draw[highedge] (yt) -- (truey);
      \draw[lowedge] (yf) -- (truey);

    \end{scope}
    \end{tikzpicture}

   \caption{A BDD representing the logical formula compiled from the program in
      Figure~\ref{fig:sens_ex}. The variables $f_1,$ $f_2$, $f_3$, and $f_4$ correspond to the
      annotated \texttt{flip} statements.}
     \label{fig:sens_bdd}
   \end{subfigure}
   \caption{Example \dippl{} programs and the BDDs each of them 
     compile to. This compilation assumes that the
     initial state is the true BDD.}
   \label{fig:bdds}
\end{figure}

% I like to mention that WMC is a general technique for factorizing the
% qualitative/symbolic structural aspects of a probability distribution (the
% logical sentence) from the purely quantitative parameters (the weight
% function).
Compilation to BDDs -- and related representations -- is currently the
state-of-the-art approach to inference in certain kinds of discrete Bayesian
networks, probabilistic logic programs, and probabilistic databases
\cite{Chavira2008, Fierens11, VdBFTDB17}. The fundamental reason is that BDDs
exploit \emph{duplicate sub-functions}: if there is a sub-function that is
constructed more than once in the symbolic compilation, that duplicate sub-function
is cached and re-used. This sub-function deduplication is critical for efficient
inference. In this section, we explore how BDDs exploit specific properties of
the program and discuss when a program will have a small BDD.

\paragraph{Independence}
Exploiting independence is essential for efficient inference and is the
backbone of existing state-of-the-art inference algorithms. There are three
kinds of independence structure which we seek to exploit. The first is the
strongest form:
\begin{definition}[Independence]
Let $\Pr(X,Y)$ be a joint probability distribution over sets of random variables $X$ and
$Y$. Then, we say that $X$ is \emph{independent} of $Y$, written $X \indep Y$,
if $\Pr(X,Y) = \Pr(X) \times \Pr(Y)$. In this case, we say that this
distribution \emph{factorizes} over the variables $X$ and $Y$.
\end{definition}
Figure~\ref{fig:indep_ex} shows a probabilistic program with two independent
random variables $x$ and $y$. The corresponding BDD generated in
Figure~\ref{fig:indep_bdd} exploits the independence between the variables $x$
and $y$. In particular, we see that node $f_2$ does not depend on the particular
value of $x$. Thus, the BDD \emph{factorizes} the distribution over $x$ and $y$.
As a consequence, the size of the BDD grows linearly with the number of
independent random variables.

\paragraph{Conditional independence}
The next form of independence we consider is \emph{conditional independence}:
\begin{definition}[Conditional independence]
  Let $\Pr(X,Y,Z)$ be a joint probability distribution over sets of random variables
  $X,Y$, and $Z$. Then, we say $X$ is independent of $Z$ \emph{given} $Y$,
  written $X \indep Z \mid Y$, if $\Pr(X,Z \mid Y) = \Pr(X \mid Y) \times \Pr(Z
  \mid Y)$.
\end{definition}
Figure~\ref{fig:motiv_ex_fig} gave an example probabilistic program that
exhibits conditional independence. In this program, the variables $x$ and $z$
are correlated unless $y$ is fixed to a particular value: thus, $x$ and $z$ are
conditionally independent given $y$. Figure~\ref{fig:motiv_bdd} shows how this
conditional independence is exploited by the BDD; thus, Markov chains have
BDD representations that are linear in size to the length of chain.

Conditional independence is exploited by specialized inference
algorithms for Bayesian networks like the join-tree algorithm \citep{koller2009probabilistic}.
However, conditional independence is not exploited by path-based -- or
enumerative -- probabilistic program inference procedures, such as the method
utilized by Psi \cite{Gehr2016}.

\paragraph{Context-specific independence}
The final form of independence we will discuss is context-specific independence.
Context-specific independence is a weakening of conditional independence that occurs
when two sets of random variables are independent only when a third set of
variables all take on a particular \emph{value} \cite{boutilier1996context}:

\begin{definition}[Context-specific independence]
  Consider a joint probability distribution $\Pr(X,Y,Z)$ over sets of random variables
$X,Y$, and $Z$, and let $c$ be an assignment to variables in $Z$. Then, we
say $X$ is contextually independent of $Y$ \emph{given} $Z = c$, written $X
\indep Y \mid Z = c$, if $\Pr(X,Y \mid Z = c) = \Pr(X \mid Z = c) \times \Pr(Y
\mid Z=c)$.
\end{definition}
An example program that exhibits context-specific independence is show in Figure~\ref{fig:sens_ex}.
The variables $x$ and $y$ are
correlated if $z = \false$ or if $z$ is unknown, but they are independent if $z=
\true$. Thus, $x$ is independent of $y$ given $z = \true$.

The equivalent BDD generated in Figure~\ref{fig:sens_bdd} exploits the conditional
independence of $x$ and $y$ given $z = \true$ by first branching on the value of $z$,
and then representing the configurations of $x$ and $y$ as two sub-functions. 
Note here that the variable order
of the BDD is relevant. The BDD generated in Figure~\ref{fig:sens_bdd} exploits
the context-specific independence of $x$ and $y$ given $z=\true$ by
representing $x$ and $y$ in a factorized manner when $z = \true$. Note how the
sub-function when $z=\true$ is isomorphic to Figure~\ref{fig:indep_bdd}.

In general, exploiting context-specific independence is challenging and is not
directly supported in typical Bayesian network inference algorithms such as the
join-tree algorithm. Context-specific independence is often present when there
is some amount of \emph{determinism}, and exploiting context-specific
independence was one of the original motivations for the development of WMC for Bayesian networks \cite{sang2005performing,Chavira2008}. Probabilistic
programs are very often partially deterministic; thus, we believe exploiting
context-specific independence is essential for practical efficient inference in
this domain. To our knowledge, no existing imperative or functional PPL
inference system currently exploits context-specific independence.

\section{Implementation \& Experiments}

\begin{figure}
  \centering
      \begin{tikzpicture}
        \begin{axis}[
          symbolic x coords={Alarm, Two Coins, Noisy Or, Grass},
          xtick = {Alarm, Two Coins, Noisy Or, Grass},
          x tick label style={rotate=25,},
          ylabel=Time (ms),
          ymax = 1200,
          enlargelimits=0.15, % fix the bars overlapping with y-tick markers
          legend style={at={(0.03,0.7)},anchor=west},
          ybar,
          width=8cm,
          height=5cm,
          tick label style={font=\footnotesize},
          ]

          % Symbolic
          \addplot[fill=white] coordinates
          {(Alarm, 17)
            (Two Coins, 9)
            (Noisy Or, 296)
            (Grass, 277)
          };

          \addplot[fill=gray] coordinates
          {(Alarm, 15)
            (Two Coins, 5)
            (Noisy Or, 1116)
            (Grass, 254)
          };

          \addplot[pattern=north west lines] coordinates
          {
            (Alarm, 173)
            (Grass, 336)
            (Noisy Or, 250)
            (Two Coins, 108)
          };

          \legend{Symbolic, Psi, R2}
        \end{axis}
\end{tikzpicture}
\caption{Performance comparison for discrete inference between Psi
  \cite{Gehr2016}, R2 \cite{Nori2014}, and symbolic inference. The $y$-axis gives
  the time in milliseconds to perform inference.}
\label{fig:baselines}
\end{figure}

\label{sec:experiments}
\begin{figure}
  \centering
  \begin{tikzpicture}
	\begin{axis}[
		height=5cm,
		width=8cm,
		grid=major,
    xlabel={Length of Markov Chain},
    ylabel={Time (s)},
    cycle list name=black white,
	]

	\addplot coordinates {
(3.00  ,0.256)
(5.00  ,0.262) 
(7.00  ,0.284) 
(10.00 ,0.316) 
(15.00 ,0.359) 
(18.00 ,0.386) 
(20.00 ,0.408) 
(22.00 ,0.425) 
(23.00 ,0.438)
(24.00 ,0.447)
(30.00 ,0.501)
(40.00 ,0.596)
(50.00 ,0.694)
(60.00 ,0.791)
(70.00 ,0.895)
(90.00 ,1.1)
(110.00,1.31)
(130.00,1.55)
(150.00,1.74)
	};
	\addlegendentry{Symbolic (This Work)}

  \addplot coordinates {

		(3,0.8)
    (5, 0.28)
    (7, 2.37)
    (10, 25.27)
    (11, 80.23)
	};
	\addlegendentry{Psi}

  \addplot[mark=diamond] coordinates {

		(3,0.0)
    (10,0.0)
    (12,1.138)
    (13, 3.89)
    (14, 52.94)
    (15, 188.09)
	};
	\addlegendentry{Storm}

  \addplot[mark=triangle] coordinates {
(3.00 , 1.12)
(5.00 , 1.13)
(7.00 , 1.13)
(10.00,	1.19)
(15.00,	1.33)
(18.00,	3.00)
(20.00,	6.81)
(22.00,	27.7)
(23.00,	50.1)
(24.00,	103.50)
  };
  
	\addlegendentry{WebPPL}
	\end{axis}
\end{tikzpicture}
\caption{Experimental comparison between
  techniques for performing exact inference on a Markov chain.}
\label{fig:exp_chain}
\vspace{-0.4cm}
\end{figure}

In this section we experimentally validate the effectiveness of our symbolic
compilation procedure for performing inference on \dippl{} programs. We directly
implemented the compilation procedure described in Section~\ref{sec:symcomp} in
Scala. We used the JavaBDD library in order create and manipulate binary
decision diagrams~\citep{javabdd}.

\subsection{Experiments}
Our goal is to
validate that it is a viable technique for performing inference in practice and
performs favorably in comparison with existing exact (and approximate) inference
techniques.

First, we discuss a collection of simple baseline inference tasks to
demonstrate that our symbolic compilation is competitive with Psi \cite{Gehr2016}, R2
\cite{Nori2014}, and the Storm probabilistic model checker
\citep{dehnert2017storm}. Then, we elaborate on the motivating example from
Section~\ref{sec:motivating_example} and clearly demonstrate how our symbolic
approach can exploit conditional independence to scale to large Markov models.
Next, we show how our technique can achieve performance that is competitive
with specialized Bayesian network inference techniques. Finally, we demonstrate
how our symbolic compilation can exploit context-specific independence to
perform inference on a synthetic grids dataset. All experiments were conducted
on a 2.3GHz Intel i5 processor with 16GB of RAM.

\subsubsection{Baselines}
In Figure~\ref{fig:baselines} we compared our technique against Psi
\cite{Gehr2016} and R2 \cite{Nori2014} on the collection of all discrete
probabilistic programs that they were both evaluated on. Psi\footnote{We used
Psi version \texttt{52b31ba}.} is an exact inference compilation technique, so
its performance can be directly compared against our performance. R2 is an
approximate inference engine and cannot produce exact inference results. The
timings reported for R2 are the time it took R2 to produce an approximation that
is within 3\% of the exact answer\footnote{Our performance figures for R2 are
excerpted from \citet{Gehr2016}. We were not able to run R2 to perform our own
experiments due to inability to access the required version of Visual Studio.}.

The code for each of the models -- Alarm, Two Coins, Noisy Or, and Grass -- was
extracted from the source code found in the R2 and Psi source code repositories
and then translated to \dippl{}. These baseline experiments show that our
symbolic technique is competitive with existing methods on well-known example
models. However, these examples are too small to demonstrate the benefits of
symbolic inference: each example is less than 25 lines. In subsequent
sections, we will demonstrate the power of symbolic inference by exploiting
independence structure in much larger discrete models.

\subsubsection{Markov Chain} Section~\ref{sec:motivating_example} discussed Markov
chains and demonstrated that a compact BDD can be compiled that exploits the
conditional independence of the network. In particular, a Markov chain of length
$n$ can be compiled to a linear-sized BDD in $n$.

Figure~\ref{fig:exp_chain} shows how two exact probabilistic programming
inference tools compare against our symbolic inference technique for inference
on Markov chains. WebPPL \cite{wingate2013automated} and Psi \cite{Gehr2016}
rely on enumerative concrete exact inference, which is exponential in the length
of the chain.
To compare against Storm, we compiled these
models directly into discrete-time Markov chains. As the length of the Markov
chain grows, the
size of the encoded discrete-time Markov chain grows exponentially.
Symbolic inference exploits the conditional independence of each
variable in the chain, and is thus linear time in the length of the chain.

\subsubsection{Bayesian Network Encodings}
\begin{table}
  \centering
  \begin{tabular}{lrrr}
    \toprule
    \textbf{Model} & \textbf{Us (s)} & \textbf{BN Time (s) \citep{Chavira2008}} & \textbf{Size of BDD} \\
    \midrule
    Alarm \citep{Chavira2008} & 1.872 & 0.21 & 52k \\
    Halfinder	& 12.652 & 1.37 & 157k \\
    Hepar2 & 7.834 & 0.28~\citep{dal2018parallel} & 139k	 \\
    pathfinder & 62.034 & 14.94 & 392k \\
    \bottomrule
  \end{tabular} 
  \vspace{2mm}
  \caption{Experimental results for Bayesian networks encoded as probabilistic
    programs. We report the time it took to perform exact inference in seconds for
    our method compared against the Bayesian network inference algorithm from
    \citet{Chavira2008}, labeled as ``BN Time''. In addition, we report the final
    size of our compiled BDD.}
  \label{tab:bn}
\end{table}

In this section we demonstrate the power of our symbolic representation by
performing exact inference on Bayesian networks encoded as probabilistic
programs. We compared the performance of our symbolic compilation procedure
against an exact inference procedure for Bayesian networks \citep{Chavira2008}.
Each of these Bayesian networks is from \citet{Chavira2008}\footnote{The
networks can also be found at \url{http://www.bnlearn.com/bnrepository}}.
Table~\ref{tab:bn} shows the experimental results: our symbolic approach is
competitive with specialized Bayesian network inference. 

The goal of these experiments is to benchmark the extent to which one sacrifices
efficient inference for a more flexible modeling framework; \textsc{Ace} is at
an inherent advantage in this comparison for two main reasons. First, our
inference algorithm is compositional, while \textsc{Ace} considers the whole
Bayesian network at once. This gives \textsc{Ace} an advantage on this
benchmark. \textsc{Ace} compiles Bayesian networks to \emph{d-DNNFs}, which is a family
of circuits that are
not efficiently composable, but are faster to compile than
BDDs~\citep{Darwiche2002}. Our technique compiles to BDDs, which are slower to
compile than d-DNNFs, but support a compositional line-by-line compilation
procedure. Second, Bayesian
networks are in some sense a worst-case probabilistic program, since they have
no interesting program structure beyond the graph structure that \textsc{Ace}
already exploits.

These Bayesian networks are not necessarily Boolean valued: they may contain
multi-valued nodes. For instance, the Alarm network has three values that the
StrokeVolume variable may take. We encode these multi-valued nodes as Boolean program variables
using a one-hot encoding in a style similar to \citet{sang2005performing}. The
generated \dippl{} files are quite large: the pathfinder program has over ten
thousand lines of code. Furthermore, neither ProbLog \cite{Fierens2013,
de2007problog} nor Psi could perform inference within 300 seconds on the alarm
example, the smallest of the above examples, thus demonstrating the power of our
encoding over probabilistic logic programs and enumerative inference on this example.

\subsubsection{Grids}
\begin{figure}
  \centering
  \begin{tikzpicture}
	\begin{axis}[
		height=5cm,
		width=8cm,
    xtick={4,5,6},
		grid=major,
    xlabel={Size of Grid},
    ylabel={Time (s)},
    legend style={at={(0.03,0.7)},anchor=west},
    cycle list name=black white,
	]

	\addplot coordinates {
    (4, 0.868)
    (5, 25.428)
	};
	\addlegendentry{0\%}

  \addplot coordinates {
		(4,0.5)
    (5,0.978)
    (6,85)
	};
	\addlegendentry{50\%}

  \addplot[mark=triangle] coordinates {
    (4, 0.454)
    (5, 0.72)
    (6, 3.893)
  };
	\addlegendentry{90\%}
	\end{axis}
\end{tikzpicture}
\caption{Experiment evaluating the effects of determinism on compiling an
  encoding of a grid Bayesian network. The $n$\% result means that there is
  $n$\% determinism present. Time was cut off at a max of 300 seconds.}
\label{fig:exp_grid}
\vspace{-0.4cm}
\end{figure}
This experiment showcases how our method exploits context-specific independence
to perform inference more efficiently in the presence of determinism. Grids were
originally introduced by \citet{sang2005performing} to demonstrate the
effectiveness of exploiting determinism during Bayesian network inference. A 3-grid
is a Boolean-valued Bayesian network arranged in a three by three grid:

\begin{center}
 \begin{tikzpicture}
  \draw (0,0) circle (5bp);
  \draw (20bp,0) circle (5bp);
  \draw (40bp,0) circle (5bp);

  \draw (0,20bp) circle (5bp);
  \draw (20bp,20bp) circle (5bp);
  \draw (40bp,20bp) circle (5bp);

  \draw (0,40bp) circle (5bp);
  \draw (20bp,40bp) circle (5bp);
  \draw (40bp,40bp) circle (5bp);

  \draw[->] (5bp, 0) -- (15bp, 0);
  \draw[->] (25bp,0) -- (35bp, 0);
  \draw[->] (5bp, 20bp) -- (15bp, 20bp);
  \draw[->] (25bp,20bp) -- (35bp, 20bp);
  \draw[->] (5bp, 40bp) -- (15bp, 40bp);
  \draw[->] (25bp,40bp) -- (35bp, 40bp);

  \draw[->] (0bp,  15bp) -- (0bp, 5bp) ;
  \draw[->] (0bp, 35bp) -- (0bp, 25bp) ;
  \draw[->] (20bp,  15bp) -- (20bp, 5bp) ;
  \draw[->] (20bp, 35bp) --  (20bp, 25bp) ;
  \draw[->] (40bp,  15bp) -- (40bp, 5bp) ;
  \draw[->] (40bp, 35bp) --  (40bp, 25bp) ;
\end{tikzpicture}
\end{center}

For these experiments we encoded grid Bayesian networks into probabilistic
programs. Grids are typically hard inference challenges even for specialized
Bayesian network inference algorithms. However, in the presence of determinism,
the grid inference task can become vastly easier. A grid is
\emph{$n$\%}-deterministic if $n\%$ of the \texttt{flip}s in the program are
replaced with assignments to constants. Figure~\ref{fig:exp_grid} shows how our
symbolic compilation exploits the context-specific independence induced by the
determinism of the program in order to perform inference more efficiently.

\section{Related Work}
\label{sec:related_work}
First we discuss two closely related individual works on exact inference for
PPLs; then we discuss larger categories of related work.

\citet{Claret2013} compiles imperative probabilistic progams to algebraic
decision diagrams (ADDs) via a form of data-flow analysis~\cite{Claret2013}. This
approach is fundamentally different from our approach, as the ADD cannot
represent the distribution in a factorized way. An ADD must contain the
probability of each model of the Boolean formula as a leaf node. Thus, it cannot
exploit the independence structure required to compactly represent joint
probability distributions with independence structure efficiently. 
% In addition,
% \citet{Claret2013} uses a different semantic domain -- a map between distributions
% -- which is incompatible with our composable compilation approach.

Also closely related is the work of \citet{pfeffer2018structured}, which seeks
to decompose the probabilistic program inference task at specific program points
where the distribution is known to factorize due to conditional independence.
This line of work only considers conditional independence --- not
context-specific independence --- and requires hand-annotated program constructs
in order to expose and exploit the independences.

\paragraph{Path-based Program Inference} Many techniques for performing
inference in current probabilistic programming languages are enumerative
or \emph{path-based}: they perform inference by integrating or approximating the
probability mass along each path of the probabilistic program \citep{Gehr2016,
wingate2013automated, Sankaranarayanan2013, Albarghouthi2017_1}. The complexity
of inference for path-based inference algorithms scales with the number of paths
through the program. The main weakness with these inference strategies is that
they cannot exploit common structure across paths -- such as independence --
and thus scale poorly on examples with many paths.

\paragraph{Probabilistic Logic Programs}
Most prior work on exact inference for probabilistic programs was developed for
probabilistic logic programs~\cite{de2007problog, Riguzzi2011TPLP,
Fierens2013, Vlasselaer2015, RenkensAAAI14}. Similar to our work, these techniques compile
a probablistic logic program into a weighted Boolean formula and utilize
state-of-the-art WMC solvers to compile the WBF into a representation that
supports efficient WMC evaluation, such as a binary decision diagram
(BDD)~\cite{Bryant86}, sentential decision diagram (SDD)~\cite{darwiche2011sdd},
or d-DNNF circuit~\cite{Darwiche2002}. Currently, WMC-based inference remains
the state-of-the-art inference strategy for probabilistic logic programs. These
techniques are not directly applicable to imperative probabilistic programs such
as \dippl{} due to the presence of sequencing, arbitrary observation, and other
imperative programming constructs.

\paragraph{Probabilistic Model Checkers}
Probabilistic model checkers such as Storm~\citep{dehnert2017storm} and
Prism~\citep{Kwiatkowska2011} can be used to perform Bayesian inference on
probabilistic systems. These methods work by compiling programs to a
representation such as a discrete-time Markov chain or Markov decision process,
for which there exist well-known inference strategies. These
representations allow probabilistic model checkers to reason about loops and
non-termination. In comparison with this work, probabilistic model checkers
suffer from a state-space explosion similar to path-based inference methods due
to the fact that they devote a node to each possible configuration of variables
in the program.

\paragraph{Compilation-based PPLs}
There exists a large number of PPLs that perform inference by converting the
program into a probabilistic graphical model~\cite{pfeffer2001ibal,
pfeffer2009figaro, McCallum2009, InferNET14}, assuming a fixed set of random
variables. There are two primary shortcomings of these techniques in relation to
ours. The first is that these techniques cannot exploit the context-specific
independence present in the program structure, since the topology of the graph
-- either a Bayesian network or factor graph -- does not make this information
explicit. Second, these techniques restrict the space of programs to those that
can be compiled. Thus they require constraints on the space of programs, such as
requiring a statically-determined number of variables, or requiring that loops
can be statically unrolled. Currently, we have similar constraints in that our
compilation technique cannot handle unbounded loops, that we hope to address in
future work.

\section{Conclusion \& Future Work}
\label{sec:conclusion}
In conclusion, we developed a semantics and symbolic compilation procedure for
exact inference in a discrete imperative probabilistic programming language
called \dippl{}. In doing so, we have drawn connections among the
probabilistic logic programming, symbolic model checking, and artificial
intelligence communities. We theoretically proved our symbolic compilation
procedure correct and experimentally validated it against existing
probabilistic systems. Finally, we showed that our method is competitive with
state-of-the-art Bayesian network inference tasks, showing that our compilation
procedures scales to large complex probability models.

We anticipate much future work in this direction.
First, we plan to extend our symbolic compilation procedure to handle
richer classes of programs. For instance, we would like to support almost-surely
terminating loops and procedures, as well as enrich the class of datatypes
supported by the language. Second, we would like to quantify precisely
the complexity of inference for discrete probabilistic programs. The graphical
models community has metrics such as \emph{tree-width} that provide precise
notions of the complexity of inference; we believe such notions may exist for
probabilistic programs as well \cite{koller2009probabilistic,
  darwiche2009modeling}. Finally, we anticipate that techniques from the
symbolic model checking community -- such as Bebop \cite{Ball2000} -- may be
applicable here, and applying these techniques is also promising future work.

%% Acknowledgments
\begin{acks}
 This work is partially supported by \grantsponsor{SP784}{National Science
Foundation}{} grants \grantnum{SP784}{IIS-1657613},
\grantnum{SP784}{IIS-1633857}, and
\grantnum{SP784}{CCF-1837129}; \grantsponsor{SP242}{DARPA}{}~XAI grant
\grantnum{SP242}{N66001-17-2-4032}, \grantsponsor{SP4126}{NEC Research}{}, a
gift from \grantsponsor{SP1450}{Intel}{}, and a gift from Facebook Research. The authors would like to thank Joe
Qian for assistance with the development of the language semantics and its
properties.
  %% acks environment is optional
%                                         %% contents suppressed with 'anonymous'
%   %% Commands \grantsponsor{<sponsorID>}{<name>}{<url>} and
%   %% \grantnum[<url>]{<sponsorID>}{<number>} should be used to
%   %% acknowledge financial support and will be used by metadata
%   %% extraction tools.
%   % This material is based upon work supported by the
%   % \grantsponsor{GS100000001}{National Science
%   %   Foundation}{http://dx.doi.org/10.13039/100000001} under Grant
%   % No.~\grantnum{GS100000001}{nnnnnnn} and Grant
%   % No.~\grantnum{GS100000001}{mmmmmmm}.  Any opinions, findings, and
%   % conclusions or recommendations expressed in this material are those
%   % of the author and do not necessarily reflect the views of the
%   % National Science Foundation.
\end{acks}

\bibliography{bib}

%%% -*-BibTeX-*-
%%% Do NOT edit. File created by BibTeX with style
%%% ACM-Reference-Format-Journals [18-Jan-2012].

\begin{thebibliography}{46}

%%% ====================================================================
%%% NOTE TO THE USER: you can override these defaults by providing
%%% customized versions of any of these macros before the \bibliography
%%% command.  Each of them MUST provide its own final punctuation,
%%% except for \shownote{}, \showDOI{}, and \showURL{}.  The latter two
%%% do not use final punctuation, in order to avoid confusing it with
%%% the Web address.
%%%
%%% To suppress output of a particular field, define its macro to expand
%%% to an empty string, or better, \unskip, like this:
%%%
%%% \newcommand{\showDOI}[1]{\unskip}   % LaTeX syntax
%%%
%%% \def \showDOI #1{\unskip}           % plain TeX syntax
%%%
%%% ====================================================================

\ifx \showCODEN    \undefined \def \showCODEN     #1{\unskip}     \fi
\ifx \showDOI      \undefined \def \showDOI       #1{#1}\fi
\ifx \showISBNx    \undefined \def \showISBNx     #1{\unskip}     \fi
\ifx \showISBNxiii \undefined \def \showISBNxiii  #1{\unskip}     \fi
\ifx \showISSN     \undefined \def \showISSN      #1{\unskip}     \fi
\ifx \showLCCN     \undefined \def \showLCCN      #1{\unskip}     \fi
\ifx \shownote     \undefined \def \shownote      #1{#1}          \fi
\ifx \showarticletitle \undefined \def \showarticletitle #1{#1}   \fi
\ifx \showURL      \undefined \def \showURL       {\relax}        \fi
% The following commands are used for tagged output and should be
% invisible to TeX
\providecommand\bibfield[2]{#2}
\providecommand\bibinfo[2]{#2}
\providecommand\natexlab[1]{#1}
\providecommand\showeprint[2][]{arXiv:#2}

\bibitem[\protect\citeauthoryear{Akers}{Akers}{1978}]%
        {Akers1978BinaryDD}
\bibfield{author}{\bibinfo{person}{Sheldon~B. Akers}.}
  \bibinfo{year}{1978}\natexlab{}.
\newblock \showarticletitle{Binary Decision Diagrams}.
\newblock \bibinfo{journal}{\emph{IEEE Trans. Comput.}}  \bibinfo{volume}{C-27}
  (\bibinfo{year}{1978}), \bibinfo{pages}{509--516}.
\newblock


\bibitem[\protect\citeauthoryear{Albarghouthi, D'Antoni, Drews, and
  Nori}{Albarghouthi et~al\mbox{.}}{2017}]%
        {Albarghouthi2017_1}
\bibfield{author}{\bibinfo{person}{Aws Albarghouthi}, \bibinfo{person}{Loris
  D'Antoni}, \bibinfo{person}{Samuel Drews}, {and} \bibinfo{person}{Aditya~V.
  Nori}.} \bibinfo{year}{2017}\natexlab{}.
\newblock \showarticletitle{FairSquare: Probabilistic Verification of Program
  Fairness}.
\newblock \bibinfo{journal}{\emph{Proc. ACM Program. Lang.}}
  \bibinfo{volume}{1}, \bibinfo{number}{OOPSLA}, Article
  \bibinfo{articleno}{80} (\bibinfo{date}{Oct.} \bibinfo{year}{2017}),
  \bibinfo{numpages}{30}~pages.
\newblock
\showISSN{2475-1421}
\urldef\tempurl%
\url{https://doi.org/10.1145/3133904}
\showDOI{\tempurl}


\bibitem[\protect\citeauthoryear{Ball and Rajamani}{Ball and Rajamani}{2000}]%
        {Ball2000}
\bibfield{author}{\bibinfo{person}{Thomas Ball} {and}
  \bibinfo{person}{Sriram~K. Rajamani}.} \bibinfo{year}{2000}\natexlab{}.
\newblock \showarticletitle{Bebop: A Symbolic Model Checker for Boolean
  Programs}. In \bibinfo{booktitle}{\emph{SPIN Model Checking and Software
  Verification}}. \bibinfo{pages}{113--130}.
\newblock


\bibitem[\protect\citeauthoryear{Boutilier, Friedman, Goldszmidt, and
  Koller}{Boutilier et~al\mbox{.}}{1996}]%
        {boutilier1996context}
\bibfield{author}{\bibinfo{person}{Craig Boutilier}, \bibinfo{person}{Nir
  Friedman}, \bibinfo{person}{Moises Goldszmidt}, {and} \bibinfo{person}{Daphne
  Koller}.} \bibinfo{year}{1996}\natexlab{}.
\newblock \showarticletitle{Context-specific independence in Bayesian
  networks}. In \bibinfo{booktitle}{\emph{Proceedings of the Twelfth
  international conference on Uncertainty in artificial intelligence}}. Morgan
  Kaufmann Publishers Inc., \bibinfo{pages}{115--123}.
\newblock


\bibitem[\protect\citeauthoryear{Bryant}{Bryant}{1986}]%
        {Bryant86}
\bibfield{author}{\bibinfo{person}{R. Bryant}.}
  \bibinfo{year}{1986}\natexlab{}.
\newblock \showarticletitle{Graph-based algorithms for {Boolean} function
  manipulation}.
\newblock \bibinfo{journal}{\emph{IEEE TC}}  \bibinfo{volume}{C-35}
  (\bibinfo{year}{1986}), \bibinfo{pages}{677--691}.
\newblock


\bibitem[\protect\citeauthoryear{Chavira and Darwiche}{Chavira and
  Darwiche}{2008}]%
        {Chavira2008}
\bibfield{author}{\bibinfo{person}{Mark Chavira} {and} \bibinfo{person}{Adnan
  Darwiche}.} \bibinfo{year}{2008}\natexlab{}.
\newblock \showarticletitle{On Probabilistic Inference by Weighted Model
  Counting}.
\newblock \bibinfo{journal}{\emph{J. Artificial Intelligence}}
  \bibinfo{volume}{172}, \bibinfo{number}{6-7} (\bibinfo{date}{April}
  \bibinfo{year}{2008}), \bibinfo{pages}{772--799}.
\newblock
\showISSN{0004-3702}
\urldef\tempurl%
\url{https://doi.org/10.1016/j.artint.2007.11.002}
\showDOI{\tempurl}


\bibitem[\protect\citeauthoryear{Chavira, Darwiche, and Jaeger}{Chavira
  et~al\mbox{.}}{2006}]%
        {chavira2006compiling}
\bibfield{author}{\bibinfo{person}{Mark Chavira}, \bibinfo{person}{Adnan
  Darwiche}, {and} \bibinfo{person}{Manfred Jaeger}.}
  \bibinfo{year}{2006}\natexlab{}.
\newblock \showarticletitle{{Compiling relational Bayesian networks for exact
  inference}}.
\newblock \bibinfo{journal}{\emph{International Journal of Approximate
  Reasoning}} \bibinfo{volume}{42}, \bibinfo{number}{1} (\bibinfo{year}{2006}),
  \bibinfo{pages}{4--20}.
\newblock


\bibitem[\protect\citeauthoryear{Choi and Darwiche}{Choi and Darwiche}{2011}]%
        {ChoiDarwiche11}
\bibfield{author}{\bibinfo{person}{Arthur Choi} {and} \bibinfo{person}{Adnan
  Darwiche}.} \bibinfo{year}{2011}\natexlab{}.
\newblock \showarticletitle{Relax, Compensate and then Recover}.
\newblock In \bibinfo{booktitle}{\emph{New Frontiers in Artificial
  Intelligence}}, \bibfield{editor}{\bibinfo{person}{Takashi Onada},
  \bibinfo{person}{Daisuke Bekki}, {and} \bibinfo{person}{Eric McCready}}
  (Eds.). \bibinfo{series}{Lecture Notes in Computer Science},
  Vol.~\bibinfo{volume}{6797}. \bibinfo{publisher}{Springer Berlin /
  Heidelberg}, \bibinfo{pages}{167--180}.
\newblock


\bibitem[\protect\citeauthoryear{Choi, Darwiche, and Van~den Broeck}{Choi
  et~al\mbox{.}}{2017}]%
        {ChoiLFU17}
\bibfield{author}{\bibinfo{person}{YooJung Choi}, \bibinfo{person}{Adnan
  Darwiche}, {and} \bibinfo{person}{Guy Van~den Broeck}.}
  \bibinfo{year}{2017}\natexlab{}.
\newblock \showarticletitle{Optimal Feature Selection for Decision Robustness
  in Bayesian Networks}. In \bibinfo{booktitle}{\emph{IJCAI 2017 Workshop on
  Logical Foundations for Uncertainty and Machine Learning}}.
\newblock
\urldef\tempurl%
\url{http://starai.cs.ucla.edu/papers/ChoiLFU17.pdf}
\showURL{%
\tempurl}


\bibitem[\protect\citeauthoryear{Claret, Rajamani, Nori, Gordon, and
  Borgstr{\"{o}}m}{Claret et~al\mbox{.}}{2013}]%
        {Claret2013}
\bibfield{author}{\bibinfo{person}{Guillaume Claret},
  \bibinfo{person}{Sriram~K. Rajamani}, \bibinfo{person}{Aditya~V. Nori},
  \bibinfo{person}{Andrew~D. Gordon}, {and} \bibinfo{person}{Johannes
  Borgstr{\"{o}}m}.} \bibinfo{year}{2013}\natexlab{}.
\newblock \showarticletitle{{Bayesian inference using data flow analysis}}.
\newblock \bibinfo{journal}{\emph{Proceedings of the 2013 9th Joint Meeting on
  Foundations of Software Engineering - ESEC/FSE 2013}} (\bibinfo{year}{2013}),
  \bibinfo{pages}{92}.
\newblock
\showISBNx{9781450322379}
\urldef\tempurl%
\url{https://doi.org/10.1145/2491411.2491423}
\showDOI{\tempurl}


\bibitem[\protect\citeauthoryear{Dal, Laarman, and Lucas}{Dal
  et~al\mbox{.}}{2018}]%
        {dal2018parallel}
\bibfield{author}{\bibinfo{person}{Giso~H Dal}, \bibinfo{person}{Alfons~W
  Laarman}, {and} \bibinfo{person}{Peter~JF Lucas}.}
  \bibinfo{year}{2018}\natexlab{}.
\newblock \showarticletitle{Parallel Probabilistic Inference by Weighted Model
  Counting}. In \bibinfo{booktitle}{\emph{International Conference on
  Probabilistic Graphical Models}}. \bibinfo{pages}{97--108}.
\newblock


\bibitem[\protect\citeauthoryear{Darwiche}{Darwiche}{2009}]%
        {darwiche2009modeling}
\bibfield{author}{\bibinfo{person}{A. Darwiche}.}
  \bibinfo{year}{2009}\natexlab{}.
\newblock \bibinfo{booktitle}{\emph{Modeling and Reasoning with {Bayesian}
  Networks}}.
\newblock \bibinfo{publisher}{Cambridge University Press}.
\newblock
\showISBNx{9780521884389}
\showLCCN{2008044605}


\bibitem[\protect\citeauthoryear{Darwiche}{Darwiche}{2011}]%
        {darwiche2011sdd}
\bibfield{author}{\bibinfo{person}{Adnan Darwiche}.}
  \bibinfo{year}{2011}\natexlab{}.
\newblock \showarticletitle{SDD: A new canonical representation of
  propositional knowledge bases}. In \bibinfo{booktitle}{\emph{IJCAI
  Proceedings-International Joint Conference on Artificial Intelligence}}.
  \bibinfo{pages}{819}.
\newblock


\bibitem[\protect\citeauthoryear{Darwiche and Marquis}{Darwiche and
  Marquis}{2002}]%
        {Darwiche2002}
\bibfield{author}{\bibinfo{person}{A. Darwiche} {and} \bibinfo{person}{P.
  Marquis}.} \bibinfo{year}{2002}\natexlab{}.
\newblock \showarticletitle{A Knowledge Compilation Map}.
\newblock \bibinfo{journal}{\emph{Journal of Artificial Intelligence Research}}
   \bibinfo{volume}{17} (\bibinfo{year}{2002}), \bibinfo{pages}{229--264}.
\newblock


\bibitem[\protect\citeauthoryear{De~Raedt, Kimmig, and Toivonen}{De~Raedt
  et~al\mbox{.}}{2007}]%
        {de2007problog}
\bibfield{author}{\bibinfo{person}{Luc De~Raedt}, \bibinfo{person}{Angelika
  Kimmig}, {and} \bibinfo{person}{Hannu Toivonen}.}
  \bibinfo{year}{2007}\natexlab{}.
\newblock \showarticletitle{ProbLog: A Probabilistic Prolog and Its Application
  in Link Discovery.}. In \bibinfo{booktitle}{\emph{Proceedings of IJCAI}},
  Vol.~\bibinfo{volume}{7}. \bibinfo{pages}{2462--2467}.
\newblock


\bibitem[\protect\citeauthoryear{Dehnert, Junges, Katoen, and Volk}{Dehnert
  et~al\mbox{.}}{2017}]%
        {dehnert2017storm}
\bibfield{author}{\bibinfo{person}{Christian Dehnert},
  \bibinfo{person}{Sebastian Junges}, \bibinfo{person}{Joost-Pieter Katoen},
  {and} \bibinfo{person}{Matthias Volk}.} \bibinfo{year}{2017}\natexlab{}.
\newblock \showarticletitle{A storm is coming: A modern probabilistic model
  checker}. In \bibinfo{booktitle}{\emph{International Conference on Computer
  Aided Verification}}. Springer, \bibinfo{pages}{592--600}.
\newblock


\bibitem[\protect\citeauthoryear{Fierens, Van~den Broeck, Renkens, Shterionov,
  Gutmann, Thon, Janssens, and De~Raedt}{Fierens et~al\mbox{.}}{2013}]%
        {Fierens2013}
\bibfield{author}{\bibinfo{person}{Daan Fierens}, \bibinfo{person}{Guy Van~den
  Broeck}, \bibinfo{person}{Joris Renkens}, \bibinfo{person}{Dimitar
  Shterionov}, \bibinfo{person}{Bernd Gutmann}, \bibinfo{person}{Ingo Thon},
  \bibinfo{person}{Gerda Janssens}, {and} \bibinfo{person}{Luc De~Raedt}.}
  \bibinfo{year}{2013}\natexlab{}.
\newblock \showarticletitle{Inference and learning in probabilistic logic
  programs using weighted Boolean formulas}.
\newblock \bibinfo{journal}{\emph{J. Theory and Practice of Logic Programming}}
   \bibinfo{volume}{15(3)} (\bibinfo{year}{2013}), \bibinfo{pages}{358 -- 401}.
\newblock


\bibitem[\protect\citeauthoryear{Fierens, Van~den Broeck, Thon, Gutmann, and
  De~Raedt}{Fierens et~al\mbox{.}}{2011}]%
        {Fierens11}
\bibfield{author}{\bibinfo{person}{Daan Fierens}, \bibinfo{person}{Guy Van~den
  Broeck}, \bibinfo{person}{Ingo Thon}, \bibinfo{person}{Bernd Gutmann}, {and}
  \bibinfo{person}{Luc De~Raedt}.} \bibinfo{year}{2011}\natexlab{}.
\newblock \showarticletitle{Inference in probabilistic logic programs using
  weighted {CNF}'s}. In \bibinfo{booktitle}{\emph{Proceedings of UAI}}.
  \bibinfo{pages}{211--220}.
\newblock


\bibitem[\protect\citeauthoryear{Friedman and Van~den Broeck}{Friedman and
  Van~den Broeck}{2018}]%
        {FriedmanNIPS18}
\bibfield{author}{\bibinfo{person}{Tal Friedman} {and} \bibinfo{person}{Guy
  Van~den Broeck}.} \bibinfo{year}{2018}\natexlab{}.
\newblock \showarticletitle{Approximate Knowledge Compilation by Online
  Collapsed Importance Sampling}. In \bibinfo{booktitle}{\emph{Advances in
  Neural Information Processing Systems 31 (NIPS)}}.
\newblock


\bibitem[\protect\citeauthoryear{Gehr, Misailovic, and Vechev}{Gehr
  et~al\mbox{.}}{2016a}]%
        {gehr2016psi}
\bibfield{author}{\bibinfo{person}{Timon Gehr}, \bibinfo{person}{Sasa
  Misailovic}, {and} \bibinfo{person}{Martin Vechev}.}
  \bibinfo{year}{2016}\natexlab{a}.
\newblock \showarticletitle{Psi: Exact symbolic inference for probabilistic
  programs}. In \bibinfo{booktitle}{\emph{International Conference on Computer
  Aided Verification}}. Springer, \bibinfo{pages}{62--83}.
\newblock


\bibitem[\protect\citeauthoryear{Gehr, Misailovic, and Vechev}{Gehr
  et~al\mbox{.}}{2016b}]%
        {Gehr2016}
\bibfield{author}{\bibinfo{person}{Timon Gehr}, \bibinfo{person}{Sasa
  Misailovic}, {and} \bibinfo{person}{Martin Vechev}.}
  \bibinfo{year}{2016}\natexlab{b}.
\newblock \showarticletitle{PSI: Exact symbolic inference for probabilistic
  programs}.
\newblock \bibinfo{journal}{\emph{Proc.~of ESOP/ETAPS}}  \bibinfo{volume}{9779}
  (\bibinfo{year}{2016}), \bibinfo{pages}{62--83}.
\newblock
\showISBNx{9783319415277}
\showISSN{16113349}


\bibitem[\protect\citeauthoryear{Gogate and Dechter}{Gogate and
  Dechter}{2011}]%
        {gogate2011samplesearch}
\bibfield{author}{\bibinfo{person}{V. Gogate} {and} \bibinfo{person}{R.
  Dechter}.} \bibinfo{year}{2011}\natexlab{}.
\newblock \showarticletitle{SampleSearch: Importance sampling in presence of
  determinism}.
\newblock \bibinfo{journal}{\emph{Artificial Intelligence}}
  \bibinfo{volume}{175}, \bibinfo{number}{2} (\bibinfo{year}{2011}),
  \bibinfo{pages}{694--729}.
\newblock


\bibitem[\protect\citeauthoryear{Koller and Friedman}{Koller and
  Friedman}{2009}]%
        {koller2009probabilistic}
\bibfield{author}{\bibinfo{person}{D. Koller} {and} \bibinfo{person}{N.
  Friedman}.} \bibinfo{year}{2009}\natexlab{}.
\newblock \bibinfo{booktitle}{\emph{Probabilistic graphical models: principles
  and techniques}}.
\newblock \bibinfo{publisher}{MIT press}.
\newblock


\bibitem[\protect\citeauthoryear{Kwiatkowska, Norman, and Parker}{Kwiatkowska
  et~al\mbox{.}}{2011}]%
        {Kwiatkowska2011}
\bibfield{author}{\bibinfo{person}{Marta Kwiatkowska}, \bibinfo{person}{Gethin
  Norman}, {and} \bibinfo{person}{David Parker}.}
  \bibinfo{year}{2011}\natexlab{}.
\newblock \showarticletitle{PRISM 4.0: Verification of Probabilistic Real-time
  Systems}. In \bibinfo{booktitle}{\emph{Proceedings of the 23rd International
  Conference on Computer Aided Verification}}
  \emph{(\bibinfo{series}{CAV'11})}. \bibinfo{publisher}{Springer-Verlag},
  \bibinfo{address}{Berlin, Heidelberg}, \bibinfo{pages}{585--591}.
\newblock
\showISBNx{978-3-642-22109-5}


\bibitem[\protect\citeauthoryear{McCallum, Schultz, and Singh}{McCallum
  et~al\mbox{.}}{2009}]%
        {McCallum2009}
\bibfield{author}{\bibinfo{person}{a McCallum}, \bibinfo{person}{K Schultz},
  {and} \bibinfo{person}{S Singh}.} \bibinfo{year}{2009}\natexlab{}.
\newblock \showarticletitle{{Factorie: Probabilistic programming via
  imperatively defined factor graphs}}.
\newblock \bibinfo{journal}{\emph{Proc.~of NIPS}}  \bibinfo{volume}{22}
  (\bibinfo{year}{2009}), \bibinfo{pages}{1249--1257}.
\newblock
\showISBNx{9781615679119}
\showISSN{03643417}


\bibitem[\protect\citeauthoryear{Minka, Winn, Guiver, Webster, Zaykov, Yangel,
  Spengler, and Bronskill}{Minka et~al\mbox{.}}{2014}]%
        {InferNET14}
\bibfield{author}{\bibinfo{person}{T. Minka}, \bibinfo{person}{J.M. Winn},
  \bibinfo{person}{J.P. Guiver}, \bibinfo{person}{S. Webster},
  \bibinfo{person}{Y. Zaykov}, \bibinfo{person}{B. Yangel}, \bibinfo{person}{A.
  Spengler}, {and} \bibinfo{person}{J. Bronskill}.}
  \bibinfo{year}{2014}\natexlab{}.
\newblock \bibinfo{title}{{Infer.NET 2.6}}.
\newblock
\newblock
\newblock
\shownote{Microsoft Research Cambridge.
  http://research.microsoft.com/infernet.}


\bibitem[\protect\citeauthoryear{Muise, McIlraith, Beck, and Hsu}{Muise
  et~al\mbox{.}}{2010}]%
        {muise2010fast}
\bibfield{author}{\bibinfo{person}{C. Muise}, \bibinfo{person}{S. McIlraith},
  \bibinfo{person}{J.C. Beck}, {and} \bibinfo{person}{E. Hsu}.}
  \bibinfo{year}{2010}\natexlab{}.
\newblock \showarticletitle{Fast {d-DNNF} Compilation with {sharpSAT}}. In
  \bibinfo{booktitle}{\emph{Workshops at the Twenty-Fourth AAAI Conference on
  Artificial Intelligence}}.
\newblock


\bibitem[\protect\citeauthoryear{Nori, Rajamani, and Samuel}{Nori
  et~al\mbox{.}}{2014}]%
        {Nori2014}
\bibfield{author}{\bibinfo{person}{Aditya~V Nori}, \bibinfo{person}{Sriram~K
  Rajamani}, {and} \bibinfo{person}{Selva Samuel}.}
  \bibinfo{year}{2014}\natexlab{}.
\newblock \showarticletitle{{R2 : An Efficient MCMC Sampler for Probabilistic
  Programs}}.
\newblock \bibinfo{journal}{\emph{Aaai}} (\bibinfo{year}{2014}),
  \bibinfo{pages}{2476--2482}.
\newblock
\showISBNx{9781577356806}


\bibitem[\protect\citeauthoryear{Oztok and Darwiche}{Oztok and
  Darwiche}{2014}]%
        {OztokD14b}
\bibfield{author}{\bibinfo{person}{Umut Oztok} {and} \bibinfo{person}{Adnan
  Darwiche}.} \bibinfo{year}{2014}\natexlab{}.
\newblock \showarticletitle{On Compiling CNF into Decision-DNNF}. In
  \bibinfo{booktitle}{\emph{Proceedings of the 20th International Conference on
  Principles and Practice of Constraint Programming (CP)}}.
  \bibinfo{pages}{42--57}.
\newblock


\bibitem[\protect\citeauthoryear{Oztok and Darwiche}{Oztok and
  Darwiche}{2015}]%
        {OztokD15}
\bibfield{author}{\bibinfo{person}{U. Oztok} {and} \bibinfo{person}{A.
  Darwiche}.} \bibinfo{year}{2015}\natexlab{}.
\newblock \showarticletitle{A Top-Down Compiler for Sentential Decision
  Diagrams}. In \bibinfo{booktitle}{\emph{Proceedings of the 24th International
  Joint Conference on Artificial Intelligence (IJCAI)}}.
\newblock


\bibitem[\protect\citeauthoryear{Pearl}{Pearl}{1988}]%
        {Pearl1988PRIS}
\bibfield{author}{\bibinfo{person}{Judea Pearl}.}
  \bibinfo{year}{1988}\natexlab{}.
\newblock \bibinfo{booktitle}{\emph{{Probabilistic Reasoning in Intelligent
  Systems: Networks of Plausible Inference}}}.
\newblock \bibinfo{publisher}{Morgan Kaufmann}.
\newblock


\bibitem[\protect\citeauthoryear{Pfeffer}{Pfeffer}{2001}]%
        {pfeffer2001ibal}
\bibfield{author}{\bibinfo{person}{A. Pfeffer}.}
  \bibinfo{year}{2001}\natexlab{}.
\newblock \showarticletitle{{IBAL}: A probabilistic rational programming
  language}. In \bibinfo{booktitle}{\emph{International Joint Conference on
  Artificial Intelligence}}, Vol.~\bibinfo{volume}{17}.
  \bibinfo{pages}{733--740}.
\newblock


\bibitem[\protect\citeauthoryear{Pfeffer}{Pfeffer}{2009}]%
        {pfeffer2009figaro}
\bibfield{author}{\bibinfo{person}{Avi Pfeffer}.}
  \bibinfo{year}{2009}\natexlab{}.
\newblock \showarticletitle{Figaro: An object-oriented probabilistic
  programming language}.
\newblock \bibinfo{journal}{\emph{Charles River Analytics Technical Report}}
  \bibinfo{volume}{137} (\bibinfo{year}{2009}).
\newblock


\bibitem[\protect\citeauthoryear{Pfeffer, Ruttenberg, Kretschmer, and
  OConnor}{Pfeffer et~al\mbox{.}}{2018}]%
        {pfeffer2018structured}
\bibfield{author}{\bibinfo{person}{Avi Pfeffer}, \bibinfo{person}{Brian
  Ruttenberg}, \bibinfo{person}{William Kretschmer}, {and}
  \bibinfo{person}{Alison OConnor}.} \bibinfo{year}{2018}\natexlab{}.
\newblock \showarticletitle{Structured Factored Inference for Probabilistic
  Programming}. In \bibinfo{booktitle}{\emph{International Conference on
  Artificial Intelligence and Statistics}}. \bibinfo{pages}{1224--1232}.
\newblock


\bibitem[\protect\citeauthoryear{Renkens, Kimmig, Van~den Broeck, and
  De~Raedt}{Renkens et~al\mbox{.}}{2014}]%
        {RenkensAAAI14}
\bibfield{author}{\bibinfo{person}{Joris Renkens}, \bibinfo{person}{Angelika
  Kimmig}, \bibinfo{person}{Guy Van~den Broeck}, {and} \bibinfo{person}{Luc
  De~Raedt}.} \bibinfo{year}{2014}\natexlab{}.
\newblock \showarticletitle{Explanation-based approximate weighted model
  counting for probabilistic logics}. In \bibinfo{booktitle}{\emph{Proceedings
  of the 28th AAAI Conference on Artificial Intelligence, AAAI}}.
\newblock
\urldef\tempurl%
\url{http://starai.cs.ucla.edu/papers/RenkensAAAI14.pdf}
\showURL{%
\tempurl}


\bibitem[\protect\citeauthoryear{Riguzzi and Swift}{Riguzzi and Swift}{2011}]%
        {Riguzzi2011TPLP}
\bibfield{author}{\bibinfo{person}{Fabrizio Riguzzi} {and}
  \bibinfo{person}{Terrance Swift}.} \bibinfo{year}{2011}\natexlab{}.
\newblock \showarticletitle{{The {PITA} System: Tabling and Answer Subsumption
  for Reasoning under Uncertainty}}.
\newblock \bibinfo{journal}{\emph{Theory and Practice of Logic Programming}}
  \bibinfo{volume}{11}, \bibinfo{number}{4--5} (\bibinfo{year}{2011}),
  \bibinfo{pages}{433--449}.
\newblock


\bibitem[\protect\citeauthoryear{Roth}{Roth}{1996}]%
        {roth1996hardness}
\bibfield{author}{\bibinfo{person}{Dan Roth}.} \bibinfo{year}{1996}\natexlab{}.
\newblock \showarticletitle{On the hardness of approximate reasoning}.
\newblock \bibinfo{journal}{\emph{Artificial Intelligence}}
  \bibinfo{volume}{82}, \bibinfo{number}{1} (\bibinfo{year}{1996}),
  \bibinfo{pages}{273--302}.
\newblock


\bibitem[\protect\citeauthoryear{Sang, Beame, and Kautz}{Sang
  et~al\mbox{.}}{2005}]%
        {sang2005performing}
\bibfield{author}{\bibinfo{person}{Tian Sang}, \bibinfo{person}{Paul Beame},
  {and} \bibinfo{person}{Henry~A Kautz}.} \bibinfo{year}{2005}\natexlab{}.
\newblock \showarticletitle{Performing Bayesian inference by weighted model
  counting}. In \bibinfo{booktitle}{\emph{AAAI}}, Vol.~\bibinfo{volume}{5}.
  \bibinfo{pages}{475--481}.
\newblock


\bibitem[\protect\citeauthoryear{Sankaranarayanan, Chakarov, and
  Gulwani}{Sankaranarayanan et~al\mbox{.}}{2013}]%
        {Sankaranarayanan2013}
\bibfield{author}{\bibinfo{person}{Sriram Sankaranarayanan},
  \bibinfo{person}{Aleksandar Chakarov}, {and} \bibinfo{person}{Sumit
  Gulwani}.} \bibinfo{year}{2013}\natexlab{}.
\newblock \showarticletitle{Static Analysis for Probabilistic Programs:
  Inferring Whole Program Properties from Finitely Many Paths}.
\newblock \bibinfo{journal}{\emph{SIGPLAN Not.}} \bibinfo{volume}{48},
  \bibinfo{number}{6} (\bibinfo{date}{June} \bibinfo{year}{2013}),
  \bibinfo{pages}{447--458}.
\newblock
\showISSN{0362-1340}
\urldef\tempurl%
\url{https://doi.org/10.1145/2499370.2462179}
\showDOI{\tempurl}


\bibitem[\protect\citeauthoryear{Sontag, Globerson, and Jaakkola}{Sontag
  et~al\mbox{.}}{2011}]%
        {sontag2011introduction}
\bibfield{author}{\bibinfo{person}{David Sontag}, \bibinfo{person}{Amir
  Globerson}, {and} \bibinfo{person}{Tommi Jaakkola}.}
  \bibinfo{year}{2011}\natexlab{}.
\newblock \showarticletitle{Introduction to dual composition for inference}.
\newblock In \bibinfo{booktitle}{\emph{Optimization for Machine Learning}}.
  \bibinfo{publisher}{MIT Press}.
\newblock


\bibitem[\protect\citeauthoryear{Van~den Broeck and Suciu}{Van~den Broeck and
  Suciu}{2017}]%
        {VdBFTDB17}
\bibfield{author}{\bibinfo{person}{Guy Van~den Broeck} {and}
  \bibinfo{person}{Dan Suciu}.} \bibinfo{year}{2017}\natexlab{}.
\newblock \bibinfo{booktitle}{\emph{Query Processing on Probabilistic Data: A
  Survey}}.
\newblock \bibinfo{publisher}{Now Publishers}.
\newblock
\urldef\tempurl%
\url{https://doi.org/10.1561/1900000052}
\showDOI{\tempurl}


\bibitem[\protect\citeauthoryear{Vlasselaer, Meert, Van~den Broeck, and
  De~Raedt}{Vlasselaer et~al\mbox{.}}{2016}]%
        {VlasselaerAIJ16}
\bibfield{author}{\bibinfo{person}{Jonas Vlasselaer}, \bibinfo{person}{Wannes
  Meert}, \bibinfo{person}{Guy Van~den Broeck}, {and} \bibinfo{person}{Luc
  De~Raedt}.} \bibinfo{year}{2016}\natexlab{}.
\newblock \showarticletitle{Exploiting Local and Repeated Structure in Dynamic
  Bayesian Networks}.
\newblock \bibinfo{journal}{\emph{Artificial Intelligence}}
  \bibinfo{volume}{232} (\bibinfo{date}{March} \bibinfo{year}{2016}),
  \bibinfo{pages}{43 -- 53}.
\newblock
\showISSN{0004-3702}
\urldef\tempurl%
\url{https://doi.org/10.1016/j.artint.2015.12.001}
\showDOI{\tempurl}


\bibitem[\protect\citeauthoryear{Vlasselaer, {Van den Broeck}, Kimmig, Meert,
  and {De Raedt}}{Vlasselaer et~al\mbox{.}}{2015}]%
        {Vlasselaer2015}
\bibfield{author}{\bibinfo{person}{Jonas Vlasselaer}, \bibinfo{person}{Guy {Van
  den Broeck}}, \bibinfo{person}{Angelika Kimmig}, \bibinfo{person}{Wannes
  Meert}, {and} \bibinfo{person}{Luc {De Raedt}}.}
  \bibinfo{year}{2015}\natexlab{}.
\newblock \showarticletitle{{Anytime inference in probabilistic logic programs
  with {T}p-compilation}}. In \bibinfo{booktitle}{\emph{{Proceedings of 24th
  International Joint Conference on Artificial Intelligence (IJCAI)}}}.
\newblock


\bibitem[\protect\citeauthoryear{Wainwright, Jordan, et~al\mbox{.}}{Wainwright
  et~al\mbox{.}}{2008}]%
        {wainwright2008graphical}
\bibfield{author}{\bibinfo{person}{Martin~J Wainwright},
  \bibinfo{person}{Michael~I Jordan}, {et~al\mbox{.}}}
  \bibinfo{year}{2008}\natexlab{}.
\newblock \showarticletitle{Graphical models, exponential families, and
  variational inference}.
\newblock \bibinfo{journal}{\emph{Foundations and Trends{\textregistered} in
  Machine Learning}} \bibinfo{volume}{1}, \bibinfo{number}{1--2}
  (\bibinfo{year}{2008}), \bibinfo{pages}{1--305}.
\newblock


\bibitem[\protect\citeauthoryear{Whaley}{Whaley}{2007}]%
        {javabdd}
\bibfield{author}{\bibinfo{person}{John Whaley}.}
  \bibinfo{year}{2007}\natexlab{}.
\newblock \bibinfo{title}{JavaBDD}.
\newblock \bibinfo{howpublished}{\url{http://javabdd.sourceforge.net}}.
\newblock


\bibitem[\protect\citeauthoryear{Wingate and Weber}{Wingate and Weber}{2013}]%
        {wingate2013automated}
\bibfield{author}{\bibinfo{person}{David Wingate} {and}
  \bibinfo{person}{Theophane Weber}.} \bibinfo{year}{2013}\natexlab{}.
\newblock \showarticletitle{Automated variational inference in probabilistic
  programming}.
\newblock \bibinfo{journal}{\emph{arXiv preprint arXiv:1301.1299}}
  (\bibinfo{year}{2013}).
\newblock


\end{thebibliography}
\appendix
\onecolumn
\section{Semantics of \dippl{}}
\label{sec:denot}
\begin{figure}
  \centering
  \begin{subfigure}[B]{1.0\linewidth}
    \begin{mdframed}
      \begin{align*}
        \begin{split}
          \dbracket{\texttt{skip}}_T(\sigma' \mid \sigma) \triangleq&
          \begin{cases}
            1 \quad& \text{if } \sigma' = \sigma \\
            0 \quad& \text{otherwise}
          \end{cases}
        \end{split}\\
        \vspace{0.2cm}
        \dbracket{x\sim \flip(\theta)}_T(\sigma' \mid \sigma)
        \triangleq &
                     \begin{cases}
                       \theta \quad& \text{if } \sigma' = \sigma[x \mapsto \true]\\
                       1 - \theta \quad& \text{if } \sigma' = \sigma[x \mapsto \false]\\
                       0 \quad& \text{otherwise}
                     \end{cases}\\
        \vspace{0.2cm}
        \dbracket{x := \te}_T(\sigma'  \mid \sigma) \triangleq&
                                                                \begin{cases}
                                                                  1 \quad& \text{if } \sigma' = \sigma[x \mapsto \dbracket{\te}(\sigma)] \\
                                                                  0 \quad& \text{otherwise}
                                                                \end{cases}\\
        \dbracket{\texttt{observe}(\te)}_T(\sigma' \mid \sigma) \triangleq&
                                                                            \begin{cases}
                                                                              1 & \text{if } \sigma' = \sigma \text{ and } \dbracket{\te}(\sigma) = \true\\
                                                                              0 & \text{otherwise}
                                                                            \end{cases}
      \end{align*}
      \begin{align*}
        \dbracket{\ts_1; \ts_2}_T&(\sigma' \mid \sigma) \triangleq\\
                                 &\frac{\sum_{\tau \in \Sigma} \dbracket{\ts_1}_T(\tau \mid \sigma) \times
                                   \dbracket{\ts_2}_T(\sigma' \mid \tau) \times \dbracket{\ts_2}_A(\tau)}
                                   {\sum_{\tau \in \Sigma} \dbracket{\ts_1}_T(\tau \mid \sigma) \times \dbracket{\ts_2}_A(\tau) }
      \end{align*}
      \begin{align*}
        \begin{split}
          \dbracket{\texttt{if}~\te~\{\ts_1\}~\texttt{else}~\{\ts_2\}}_T&(\sigma' \mid \sigma)  \triangleq \\
          &\begin{cases}
            \dbracket{\ts_1}_T(\sigma' \mid \sigma) ~~& \text{if } \dbracket{\te}(\sigma) = \true \\
            \dbracket{\ts_2}_T(\sigma' \mid \sigma) ~~& \text{if } \dbracket{\te}(\sigma) = \false \\
          \end{cases}
        \end{split} 
      \end{align*}
    \end{mdframed}
    \caption{Transition semantics of \dippl{}.
      $\dbracket{\ts}_T(\sigma' \mid \sigma)$ gives the conditional
      probability upon executing $\ts$ of transitioning to state $\sigma'$ \emph{given} that
      the start state is $\sigma$ and no \texttt{observe} statements in $\ts$ are violated.
      If every execution path violates an observation,
      $\dbracket{\ts}_T(\sigma' \mid \sigma) = \bot$. }
    \label{fig:transition}
  \end{subfigure}\\

  \begin{subfigure}[B]{1.0\linewidth}
    \begin{mdframed}
      \begin{align*}
        \dbracket{\texttt{skip}(\te)}_A(\sigma) \triangleq& 1 \\
        \dbracket{x \sim \texttt{flip}(\theta)}_A(\sigma) \triangleq& 1 \\
        \dbracket{x := \te}_A(\sigma) \triangleq& 1 \\
        \dbracket{\texttt{observe}(\te)}_A(\sigma) \triangleq&
                                                               \begin{cases}
                                                                 1 \quad& \text{if } \dbracket{\te}(\sigma) = \true\\
                                                                 0 \quad& \text{otherwise}
                                                               \end{cases}\\
        \dbracket{\ts_1; \ts_2}_A(\sigma) \triangleq
        \dbracket{\ts_1}_A(\sigma) \times
        \sum_{\tau \in \Sigma}&\left(  \dbracket{\ts_1}_T  (\tau \mid \sigma) \times \dbracket{\ts_2}_A(\tau) \right) \\
        \dbracket{\texttt{if}~\te~\{\ts_1\}~\texttt{else}~\{\ts_2\}}_A(\sigma)
        \triangleq&
                    \begin{cases}
                      \dbracket{\ts_1}_A(\sigma) & \text{if } \dbracket{\te}(\sigma) = \true\\
                      \dbracket{\ts_2}_A(\sigma) & \text{otherwise}
                    \end{cases}
      \end{align*}
    \end{mdframed}
    \caption{Accepting semantics of \dippl{}. $\dbracket{\ts}_A(\sigma)$ gives
      the probability that no observations are violated by executing $\ts$
      beginning in state~$\sigma$.}
    \label{fig:accepting}
  \end{subfigure}   \caption{Semantics of \dippl{}.}
\end{figure}

The goal of the semantics of any probabilistic programming language is to define
the distribution over which one wishes to perform inference. In this section, we
introduce a denotational semantics that directly produces this distribution of
interest, and it is defined over program states.
A state $\sigma$ is a finite map from variables to Boolean values, and $\Sigma$
is the set of all possible states. We will be interested in probability
distributions on $\Sigma$, defined formally as follows:
\begin{definition}[Discrete probability distribution]
  Let $\Omega$ be a set called the \emph{sample space}. Then, a \emph{discrete
probability distribution} on $\Omega$ is a function $\Pr : 2^{\omega}
\rightarrow [0,1]$ such that (1) $\Pr(\Omega) = 1$; (2) for any $\omega \in
\Omega$, $\Pr(\omega) \ge 0$; (3) for any countable set of
disjoint elements $\{A_i\}$ where $A_i \subseteq 2^{\Omega}$, we have that
$\Pr(\bigcup_i \{A_i\}) = \sum_i \Pr(A_i)$.
\end{definition}

We denote the set of all possible discrete probability distributions with $\Sigma$
as the sample space as $\dist~\Sigma$. We add a special element to
$\dist~\Sigma$, denoted $\bot$, which is the function that assigns a probability
of zero to all states (this will be necessary to represent situations where an
{\tt observe}d expression is false).

We define a denotational semantics for \dippl{}, which we call its
\emph{transition semantics} and denote $\dbracket{\cdot}_T$. These semantics are
summarized in Figure~\ref{fig:transition}. The transition
semantics will be the primary semantic object of interest for \dippl{}, and will
directly produce the distribution over which we wish to perform inference.
For some statement $\ts$, the transition semantics is written
$\dbracket{\ts}_T(\sigma' \mid \sigma)$, and it computes the conditional
probability upon executing $\ts$ of transitioning to state $\sigma'$ \emph{given} that
the start state is $\sigma$ and no \texttt{observe} statements are violated. The
transition semantics have the following type signature:
\begin{align*}
  \dbracket{\ts}_T : \Sigma \rightarrow \dist~\Sigma
\end{align*}

The transition semantics of \dippl{} is shown in Figure~\ref{fig:transition}.
The semantics of {\tt skip}, assignment, and conditionals are straightforward.
The semantics of sampling from a Bernoulli distribution is analogous to that for
assignment, except that there are two possible output states depending on the
value that was sampled. An {\tt observe} statement has no effect if the
associated expression is true in $\sigma$; otherwise the semantics has the
effect of mapping $\sigma$ to the special $\bot$ distribution.

\paragraph{The Role of Observe in Sequencing}
The transition semantics of \dippl{} require that each statement be interpreted
as a conditional probability. Ideally, we would like this conditional
probability to be sufficient to describe the semantics of compositions. Perhaps
surprisingly, the conditional probability distribution of transitioning from one
state to another alone is insufficient for capturing the behavior of
compositions in the presence of observations. We will illustrate this principle
with an example. Consider the
following two \dippl{} statements:
\begin{equation*}
\mathit{bar}_1 = \left\{\begin{minipage}{.25\textwidth}
\begin{lstlisting}[mathescape=true,aboveskip=0mm,frame=none,numbers=none,
  xleftmargin=3mm]
if($x$) { $y \sim$ flip($1/4$) } 
else { $y \sim$ flip($1/2$) }
\end{lstlisting}
\end{minipage} \right\},
\end{equation*}
\begin{equation*}
\mathit{bar}_2 = \left\{\begin{minipage}{.25\textwidth}
\begin{lstlisting}[mathescape=true,aboveskip=0mm,frame=none,numbers=none,
  xleftmargin=3mm]
$y \sim$ flip($1/2$);
observe($x \lor y$);
if($y$) { $y \sim$ flip($1/2$) } 
else { $y$ := $\false$ }
\end{lstlisting}
\end{minipage} \right\}.
\end{equation*}

Both statements represent exactly the same \emph{conditional} probability distribution from input to output states:
\begin{align*}
\begin{split}
  &\dbracket{\mathit{bar}_1}_T(\sigma' \mid \sigma)
   = \dbracket{\mathit{bar}_2}_T(\sigma' \mid \sigma) \\
  & \qquad =  \begin{cases}
    1/2 \quad& \text{if } x[\sigma] = x[\sigma'] = \false,  \\
    1/4 \quad& \text{if } x[\sigma] = x[\sigma'] = \true \text{ and } y[\sigma']=\true, \\
    3/4 \quad& \text{if } x[\sigma] = x[\sigma'] = \true \text{ and } y[\sigma']=\false, \\
    0 \quad& \text{otherwise.}
  \end{cases}
\end{split}
\end{align*}
This is easy to see for $\mathit{bar}_1$, which encodes these probabilities
directly. For $\mathit{bar}_2$, intuitively, when $y=\true$ in the output, both
{\tt flip} statements must return $\true$, which
happens with probability $1/4$. When $x=\false$ in the input, $\mathit{bar}_2$
uses an {\tt observe} statement to disallow executions where the first {\tt
flip} returned $\false$. Given this observation, the {\em then} branch is always taken,
so output $y=\true$ has probability $1/2$.

Because the purpose of probabilistic programming is often to represent a
conditional probability distribution, one is easily fooled into believing that
these programs are equivalent. This is not the case: $\mathit{bar}_1$ and
$\mathit{bar}_2$ behave differently when sequenced with other statements. For
example, consider the sequences $(\mathit{foo};\mathit{bar}_1)$ and
$(\mathit{foo};\mathit{bar}_2)$ where
\begin{equation*}
\mathit{foo} = \left\{\begin{minipage}{.15\textwidth}
\begin{lstlisting}[mathescape=true,aboveskip=0mm,frame=none,numbers=none,
  xleftmargin=3mm]
$x \sim$ flip($1/3$)
\end{lstlisting}
\end{minipage} \right\}.
\end{equation*}
Let $\sigma_{ex}'$ be an output state where $x = \false, y = \true$, and let $\sigma_{ex}$ be
an arbitrary state.
The first sequence's transition semantics behave naturally for this output state:
\begin{align}
\dbracket{\mathit{foo};\mathit{bar}_1}_T(\sigma_{ex}' \mid \sigma_{ex}) = 2/3 \cdot 1/2 =
  1/3
  \label{eq:reject}
\end{align}

However, $(\mathit{foo};\mathit{bar}_2)$ represents a different distribution:
the {\tt observe} statement in $\mathit{bar}_2$ will disallow half of the
execution paths where $\mathit{foo}$ set $x = \true$. After the \texttt{observe}
statement is executed in $\mathit{bar}_2$, $\Pr(x = \true) = \frac{1}{2}$: the observation has increased the probability of $x$ being true in
$\mathit{foo}$, which was $1/3$. Thus, it is clear $\mathit{foo}$ and $\mathit{bar}_2$ cannot be
reasoned about solely as conditional probability distributions: \texttt{observe}
statements in $\mathit{bar}_2$ affect the conditional probability of
$\mathit{foo}$. Thus, the semantics of sequencing requires information beyond
solely the conditional probability of each of the sub-statements, as we discuss next.

\paragraph{Sequencing Semantics} The most interesting case in the semantics is sequencing.
We compute the transition semantics of sequencing $\dbracket{\ts_1; \ts_2}_T(\sigma' \mid
\sigma)$ using the rules of probability. To do this we require
the ability to compute the probability that a particular statement will not
violate an observation when beginning in state $\sigma$. Thus we introduce
a helper relation that we call the \emph{accepting semantics} (denoted
$\dbracket{\ts}_A$), which provides the probability that a given statement will
be {\em accepted} (i.e., that no \texttt{observe}s will fail) when executed from
a given initial state:
\begin{align*}
  \dbracket{\ts}_A : \Sigma \rightarrow [0,1]
\end{align*}

The accepting semantics is defined in Figure~\ref{fig:accepting}. The
first three rules in the figure are trivial. An {\tt observe} statement accepts
with probability 1 if the associated expression is true in the given state, and
otherwise with probability 0. A sequence of two statements accepts if both
statements accept, so the rule simply calculates that probability by summing
over all possible intermediate states. Last, the accepting probability of an
{\tt if} statement in state $\sigma$ is simply the accepting probability of
whichever branch will be taken from that state.

Now we can use the accepting semantics to give the transition semantics for
sequencing. First, we can compute the probability of both transitioning
from some initial state $\sigma$ to some final state $\sigma'$ \emph{and} the fact that no
observations are violated in $\ts_1$ or $\ts_2$:
\begin{align*}
  \alpha = \sum_{\tau \in \Sigma} \dbracket{\ts_1}_T(\tau \mid \sigma) \times
  \dbracket{\ts_2}_T(\sigma' \mid \tau) \times \dbracket{\ts_1}_A(\sigma) \times
  \dbracket{\ts_2}_A(\tau).
\end{align*}

In order to obtain the distribution of transitioning between states $\sigma$ and $\sigma'$ \emph{given} that no observations are violated, we must re-normalize this distribution by the probability that
no observations are violated:
\begin{align*}
  \beta = \dbracket{\ts_1}_A(\sigma) \times \sum_{\tau \in \Sigma} \dbracket{\ts_1}_T(\tau \mid \sigma) \times
  \dbracket{\ts_2}(\tau).
  \end{align*}
Thus, our conditional probability is $\dbracket{\ts_1; \ts_2}_T(\sigma' \mid \sigma) =
\frac{\alpha}{\beta}$. For completeness, we define the $0/0$ case to be equal to
$0$.

\section{Symbolic Compilation}

\begin{figure}
  \centering
  \begin{mdframed}
   \infrule[]{}{\texttt{skip} \rightsquigarrow (\gamma(V), \delta(V))}

   \infrule[]{\textrm{fresh} f}{x \sim \texttt{flip}(\theta) \rightsquigarrow\\
      \big((x' \Leftrightarrow f) \land \gamma(V \setminus
      \{x\}), \delta(V)[f \mapsto \theta, \neg f \mapsto 1-\theta]\big)}

    \infrule[]{}{x := \te \rightsquigarrow ((x' \Leftrightarrow \te)
      \land \gamma(V \setminus \{x\}), \delta(V))}

    \infrule[] {s_1 \rightsquigarrow (\varphi_1, w_1)  \andalso
      s_2 \rightsquigarrow (\varphi_2, w_2) \\
      \varphi_2' = \varphi_2[x_i \mapsto x_i', x_i' \mapsto x_i'']
    }{
    \ts_1; \ts_2 \rightsquigarrow ((\exists x_i' . \varphi_1 \land
    \varphi_2')[x_i '' \mapsto x_i'],  w_1 \uplus w_2)}

   \infrule[]{}{\texttt{observe(e)} \rightsquigarrow (\te \land \gamma(V), \delta(V))} 

   \infrule[]{s_1 \rightsquigarrow (\varphi_1, w_1) \andalso
     s_2 \rightsquigarrow (\varphi_2, w_2)
   }{
     \texttt{if}~\te~\{\ts_1\}~\texttt{else}~\{\ts_2\} \rightsquigarrow
     ((\te \land \varphi_1) \lor
     (\neg \te \land \varphi_2),  w_1 \uplus w_2)
   }

  \end{mdframed}
   \caption{Symbolic compilation rules.}
  \label{fig:symcomp}
\end{figure}

In this section we formally define our symbolic compilation of a \dippl{}
program to a weighted Boolean formula, denoted $\ts \rightsquigarrow (\varphi,
w)$. Intuitively, the formula $\varphi$ produced by the compilation represents
the program $\ts$ as a relation between initial states and final states, where
initial states are represented by unprimed Boolean variables $\{x_i\}$ and final
states are represented by primed Boolean variables $\{x_i'\}$. This is similar
to a standard encoding for model checking Boolean program, except we include
auxiliary variables in the encoding which are neither initial nor final
state variables \cite{Ball2000}.
These compiled weighted Boolean formula will have a probabilistic semantics
which allow them to be interpreted as either an accepting or transition
probability for the original statement.

Our goal is to ultimately give a correspondence between the compiled weighted
Boolean formula and the original denotational semantics of the statement.
First we define the translation of a state $\sigma$ to a logical formula:
\begin{definition}[Boolean state]
  Let $\sigma \!=\! \{(x_1,b_1),\ldots,(x_n,b_n)\}$.  We define the \emph{Boolean
state} $\form{\sigma}$ as $l_1 \wedge \ldots \wedge l_n$ where for each $i$,
$l_i$ is $x_i$ if $\sigma(x_i)=\true$ and $\neg x_i$ if $\sigma(x_i)=\false$.
For convenience, we also define a version that relabels state
variables to their primed versions, $\formp{\sigma} \triangleq \form{\sigma}[x_i
\mapsto x_i']$.
\end{definition}

Now, we formally describe how every compiled weighted Boolean formula can be
interpreted as a conditional probability by computing the appropriate weighted
model count:
\begin{definition}[Transition and accepting semantics]
  Let $(\varphi, w)$ be a weighted Boolean formula, and let $\sigma$ and $\sigma'$ be states.
  % with free variables $\{x_i\}$ and $\{x_i'\}$, and let $\sigma$ and $\sigma'$ be states of $\{x_i\}$ respectively. 
  Then, the \emph{transition semantics} of
$(\varphi, w)$ is defined:
  \begin{align}
    \dbracket{(\varphi, w)}_T(\sigma' \mid \sigma) \triangleq
    \frac{\wmc(\varphi \land \form{\sigma} \land \formp{\sigma'}, w)}
    {\wmc(\varphi \land \form{\sigma}, w)} 
  \end{align}
  In addition the \emph{accepting semantics} of $(\varphi, w)$ is defined:
  \begin{align*}
    \dbracket{(\varphi, w)}_A(\sigma) \triangleq
    \wmc(\varphi \land \form{\sigma}, w).
  \end{align*}
\end{definition}
Moreover, the transition semantics of Definition~\ref{def:wmc-semantics} allow for more general queries to be phrased as WMC tasks as well. For example, the probability of some event $\alpha$ being true in the output state $\sigma'$ can be computed by replacing $\formp{\sigma'}$ in Equation~\ref{eq:wmc-semantics} by a Boolean formula for $\alpha$.

% \steven{
%   TODO -- We need to fix the issue with fresh variables in the above definition.
% }

Finally, we state our correctness theorem, which describes the relation between
the accepting and transition semantics of the compiled WBF to the denotational
semantics of \dippl{}:
\begin{theorem}[Correctness of Compilation Procedure]
  Let $\ts$ be a \dippl{} program, $V$ be the set of all variables in $\ts$,
and  $\ts \rightsquigarrow (\varphi, w)$. Then for all states $\sigma$ and
$\sigma'$ over the variables in $V$, we have:
\begin{align}
    \dbracket{\ts}_T(\sigma' \mid \sigma) = \dbracket{(\varphi, w)}_T(\sigma' \mid \sigma)
      \end{align}
and
  \begin{align}
    \dbracket{\ts}_A(\sigma) = \dbracket{(\varphi, w)}_A(\sigma).
  \end{align}
\end{theorem}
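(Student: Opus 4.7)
The plan is to proceed by mutual structural induction on the \dippl{} statement $\ts$, proving the transition-semantics equality and the accepting-semantics equality \emph{simultaneously}. The coupling is forced by the sequencing rule, whose transition semantics is defined via a ratio that mixes both $\dbracket{\cdot}_T$ and $\dbracket{\cdot}_A$ of the sub-statements, so a lone induction on either equation would not close.

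For the base cases (\texttt{skip}, $x := \te$, $x \sim \flip(\theta)$, and $\texttt{observe}(\te)$), I would unfold both sides and do a direct calculation. The key invariant I would maintain is that the compiled formula $\varphi$ is a conjunction of (i) a functional constraint on the primed variables corresponding to the statement's effect and (ii) the frame conditions $\gamma(V \setminus \{x\})$ stipulating that unaffected variables retain their values. Given a state $\sigma$, conjoining $\form{\sigma}$ pins the unprimed variables, which in turn pins every primed variable to the intended $\sigma'$; the weight function $\delta(V)$ assigns weight $1$ to state literals and the Bernoulli weight to the fresh flip literal $f$, so the WMC collapses to the expected numerator. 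Then the denominator $\wmc(\varphi \land \form{\sigma}, w)$ equals $\dbracket{\ts}_A(\sigma)$ by a short direct argument, giving both equalities at once.

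The main obstacle is the sequencing case. Here the compiled formula is $(\exists x_i'.\, \varphi_1 \land \varphi_2')[x_i'' \mapsto x_i']$, where $\varphi_2'$ has had its inputs shifted to primes and its outputs to double-primes so that the existentially quantified $x_i'$ play the role of the intermediate state. The plan is to show, using the definition of WMC and the fact that the fresh flip variables of $\ts_1$ and $\ts_2$ are disjoint (hence the weights factor cleanly via $w_1 \uplus w_2$), that
\begin{align*}
\wmc\bigl((\exists x_i'.\,\varphi_1 \land \varphi_2') \land \form{\sigma} \land \formpp{\sigma'},\, w_1 \uplus w_2\bigr)
= \sum_{\tau \in \Sigma} \wmc(\varphi_1 \land \form{\sigma} \land \formp{\tau}, w_1) \cdot \wmc(\varphi_2 \land \form{\tau} \land \formp{\sigma'}, w_2).
\end{align*}
Dividing by the analogous accepting WMC and applying the induction hypothesis on each factor should reproduce precisely the ratio in the denotational definition of $\dbracket{\ts_1;\ts_2}_T$. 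The accepting equality for sequencing follows from the same decomposition but with $\formpp{\sigma'}$ dropped and the output primes re-existentialized.

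The if-then-else case is comparatively routine: since $\te$ depends only on the unprimed (input) variables, conjoining $\form{\sigma}$ forces exactly one disjunct to be satisfiable, reducing the WMC to that of the selected branch and letting the induction hypothesis finish the job. The subtle points I would flag for care are (a) ensuring that the freshness side conditions on flip variables genuinely keep $w_1$ and $w_2$ disjoint during sequencing so that $\uplus$ is well-defined, (b) a lemma that $\wmc(\varphi \land \form{\sigma} \land \formp{\sigma'}, w)$ is zero whenever $\varphi$ does not admit a consistent primed extension, to justify the $0/0$ convention matching the denotational semantics, and (c) a renaming lemma stating that WMC is invariant under bijective relabeling of variables with their weights carried along, which legitimizes the prime/double-prime gymnastics in the sequencing rule.
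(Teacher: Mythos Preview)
Your proposal is correct and follows essentially the same route as the paper: a simultaneous structural induction on $\ts$ establishing the accepting and transition equalities together, with the sequencing case handled by the intermediate-state decomposition you wrote and the \texttt{if} case by mutual exclusion of the guard. The only point the paper makes more explicit than you do is an auxiliary lemma that $\wmc(\exists x_i'.\,\psi) = \wmc(\psi)$ when the $x_i'$ are functionally determined by the remaining variables and carry unit weight---this is what licenses turning the existential over the primed intermediate variables into the sum over $\tau$ in your displayed equation, and it is worth isolating as a standalone fact.
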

\begin{proof}
  A complete proof can be found in Appendix~\ref{sec:app_correctness}.
\end{proof}

Theorem~\ref{thm:correctness} allows us to perform inference via weighted model
counting on the compiled WBF for a \dippl{} program. Next we give a description
of the symbolic compilation rules that satisfy this theorem.

\section{Symbolic Compilation Rules}
In this section we describe the symbolic compilation which satisfy
Theorem~\ref{thm:correctness} for each \dippl{} statement. The rules for
symbolic compilation are defined in Figure~\ref{fig:symcomp}. They rely on
several conventions. We denote by $V$ the set of all variables in the entire
program being compiled. If $V = \{x_1,\ldots,x_n\}$ then we use $\gamma(V)$ to
denote the formula $(x_1 \Leftrightarrow x_1') \wedge \ldots \wedge (x_n
\Leftrightarrow x_n')$, and we use $\delta(V)$ to denote the weight function
that maps each literal over $\{x_1,x_1',\ldots,x_n,x_n'\}$ to 1.

The WBF for {\tt skip} requires that the input and output states are equal and
provides a weight of 1 to each literal. The WBF for an assignment $x := \te$
requires that $x'$ be logically equivalent to $\te$ and all other variables'
values are unchanged. Note that \te is already a Boolean formula by the syntax
of \dippl{} so expressions simply compile to themselves. The WBF for drawing a
sample from a Bernoulli distribution, $x \sim \texttt{flip}(\theta)$, is similar
to that for an assignment, except that we introduce a (globally) fresh variable
$f$ to represent the sample and weight its true and false literals respectively
with the probability of drawing the corresponding value.

The WBE for an {\tt observe} statement requires the corresponding expression to
be true and that the state remains unchanged. The WBE for an {\tt if} statement
compiles the two branches to formulas and then uses the standard logical
semantics of conditionals. The weight function $w_1 \uplus w_2$ is a
\emph{shadowing union} of the two functions, favoring $w_2$. However, by
construction whenever two weight functions created by the rules have the same
literal in their domain, the corresponding weights are equal. Finally, the WBE
for a sequence composes the WBEs for the two sub-statements via a combination of
variable renaming and existential quantification.

In the following section, we delineate the advantages of utilizing WMC for
inference, and describe how WMC exploits program structure in order to perform
inference efficiently.

\section{Proof of Theorem~\ref{thm:correctness}}
\label{sec:app_correctness}
\subsection{Properties of WMC}
We begin with some important lemmas about weighted model counting:

\begin{lemma}[Independent Conjunction]
  \label{lem:wmcindconj}
  Let $\alpha$ and $\beta$ be Boolean sentences which share no variables. Then,
for any weight function $w$, $\wmc(\alpha \land \beta, w) = \wmc(\alpha, w)
\times \wmc(\beta, w)$.
\end{lemma}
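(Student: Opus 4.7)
The plan is to reduce everything to a bijection between $\mods(\alpha \land \beta)$ and the Cartesian product $\mods(\alpha) \times \mods(\beta)$, then show that the weight product factorizes along this bijection. Let $V_\alpha$ and $V_\beta$ denote the variable sets of $\alpha$ and $\beta$ respectively, which are disjoint by hypothesis. By the paper's definition, a model of a formula $\varphi$ over variables $V$ is a conjunction of literals that contains every variable in $V$ and entails $\varphi$, so models of $\alpha \land \beta$ are complete assignments over $V_\alpha \cup V_\beta$.

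The first step is to construct the bijection $\omega \mapsto (\omega_\alpha, \omega_\beta)$, where $\omega_\alpha$ is the sub-conjunction of $\omega$ consisting of literals over $V_\alpha$ and $\omega_\beta$ is the sub-conjunction over $V_\beta$. Since $V_\alpha \cap V_\beta = \emptyset$, every literal in $\omega$ belongs to exactly one of these two sets, so splitting and rejoining are inverse operations. The key semantic step is to observe that, because $\alpha$ mentions no variables of $V_\beta$, we have $\omega \models \alpha$ iff $\omega_\alpha \models \alpha$ (and symmetrically for $\beta$); hence $\omega \models \alpha \land \beta$ iff $\omega_\alpha \in \mods(\alpha)$ and $\omega_\beta \in \mods(\beta)$. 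This gives the bijection.

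Next, the weight factorization: for any model $\omega$ of $\alpha \land \beta$, the literals in $\omega$ partition disjointly into the literals of $\omega_\alpha$ and those of $\omega_\beta$, so
\begin{equation*}
\prod_{l \in \omega} w(l) \;=\; \Bigl(\prod_{l \in \omega_\alpha} w(l)\Bigr) \Bigl(\prod_{l \in \omega_\beta} w(l)\Bigr).
\end{equation*}
Substituting into the definition of $\wmc$, reindexing the sum over $\mods(\alpha \land \beta)$ as a double sum over $\mods(\alpha) \times \mods(\beta)$ via the bijection, and distributing the product over the independent sums yields
\begin{equation*}
\wmc(\alpha \land \beta, w) \;=\; \sum_{\omega_\alpha \in \mods(\alpha)} \prod_{l \in \omega_\alpha} w(l) \,\cdot\, \sum_{\omega_\beta \in \mods(\beta)} \prod_{l \in \omega_\beta} w(l) \;=\; \wmc(\alpha,w)\,\wmc(\beta,w).
\end{equation*}

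I do not anticipate any real obstacle here; the entire argument is essentially bookkeeping that exploits variable disjointness. The only subtle point worth stating carefully is the equivalence $\omega \models \alpha \iff \omega_\alpha \models \alpha$, which relies on the standard fact that satisfaction of a propositional formula depends only on the assignment to variables occurring in it. Everything else follows by rearranging a finite sum of finite products.
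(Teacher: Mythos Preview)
Your proposal is correct and follows essentially the same approach as the paper's own proof: split each model of $\alpha \land \beta$ into its $V_\alpha$- and $V_\beta$-parts, observe that this gives a bijection with $\mods(\alpha)\times\mods(\beta)$, factor the literal-weight product across the split, and distribute the resulting double sum. Your write-up is slightly more explicit about the bijection and the semantic restriction step, but the argument is the same.
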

\begin{proof}
  The proof relies on the fact that, if two sentences $\alpha$ and $\beta$ share
no variables, then any model $\omega$ of $\alpha \land \beta$ can be split into
two components, $\omega_\alpha$ and $\omega_\beta$, such that $\omega =
\omega_\alpha \land \omega_\beta$, $\omega_\alpha \Rightarrow \alpha$, and
$\omega_\beta \Rightarrow \beta$, and $\omega_\alpha$ and $\omega_\beta$ share
no variables. Then:
  \begin{align*}
    \wmc(\alpha \land \beta, w)
    =& \sum_{\omega \in \mods(\alpha \land \beta)} \prod_{l \in \omega} w(l)  \\
    =&\sum_{\omega_\alpha \in \mods(\alpha)} \sum_{\omega_\beta \in \mods(\beta)} \prod_{a \in \omega_\alpha} w(a)
       \times \prod_{b \in \omega_\beta } w(b)\\
    =&\left[  \sum_{\omega_\alpha \in \mods(\alpha)} \prod_{a \in \omega_\alpha} w(a) \right] \times \left[   \sum_{\omega_\beta \in \mods(\beta)} 
        \prod_{b \in \omega_\beta } w(b)\right]\\
    =& \wmc(\alpha, w) \times \wmc(\beta, w).
  \end{align*}
\end{proof}

\begin{lemma}
  \label{lem:wmcprod}
  Let $\alpha$ be a Boolean sentence and $x$ be a conjunction of literals.
  For any weight function $w$, $\wmc(\alpha,w) = \wmc(\alpha \mid x, w) \times
  \wmc(x,w)$. \footnote{The notation ``$\alpha \mid x$'' means condition $\alpha$
    on $x$.}
\end{lemma}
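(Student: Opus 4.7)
The plan is to reduce this identity to Lemma~\ref{lem:wmcindconj} by exhibiting a logical equivalence that separates $x$ from the rest of the formula. The key observation is that when $x$ is a conjunction of literals, the conditioned formula $\alpha \mid x$ (obtained by replacing each literal of $x$ by $\true$ and its complement by $\false$, then simplifying) contains no variables from $V_x$ — they have all been substituted away. This is precisely the disjoint-variables condition required by Lemma~\ref{lem:wmcindconj}.

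Concretely, I would first establish the syntactic fact that $\alpha \wedge x \equiv (\alpha \mid x) \wedge x$ by a structural induction on $\alpha$. At any leaf variable occurrence $v \in V_x$, conjoining with $x$ already forces $v$ to its prescribed value, so replacing $v$ by that value in advance yields a logically equivalent conjunct; leaves over $V \setminus V_x$ are unaffected by conditioning; and the Boolean connectives lift the equivalence compositionally. Since WMC is a function of the Boolean function represented (not the syntax), this equivalence transfers to weighted model counts.

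Next, I would note that $(\alpha \mid x)$ and $x$ share no variables by construction, so Lemma~\ref{lem:wmcindconj} applies directly and gives
\[
\wmc\bigl((\alpha \mid x) \wedge x,\, w\bigr) \;=\; \wmc(\alpha \mid x,\, w) \cdot \wmc(x,\, w).
\]
Chaining this with the equivalence from the previous step yields the claimed identity (read with the LHS as $\wmc(\alpha \wedge x, w)$, which is the only interpretation that makes the equation hold in general — otherwise it would fail whenever $\alpha$ is consistent with an assignment disagreeing with $x$).

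The main obstacle will be the bookkeeping of variable sets in the definition of WMC: models of $\alpha \wedge x$, of $(\alpha \mid x) \wedge x$, and of $\alpha \mid x$ alone range over different variable sets, and the definition requires models to assign every variable occurring in the formula. I would handle this by being explicit about which literals belong to $V_x$ versus $V_\alpha \setminus V_x$, and by invoking the fact that $\wmc(x, w) = \prod_{l \in x} w(l)$ since a conjunction of literals has exactly one satisfying assignment over its own variables — so the factorization in Lemma~\ref{lem:wmcindconj} accounts for each literal's weight exactly once and matches the direct expansion from the definition of WMC.
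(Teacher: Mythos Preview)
Your approach is essentially the paper's: both invoke Lemma~\ref{lem:wmcindconj} using the fact that $\alpha \mid x$ and $x$ share no variables, together with the equivalence $(\alpha \mid x)\land x \equiv \alpha \land x$. You are also right that the identity only holds with the left-hand side read as $\wmc(\alpha \land x, w)$; the paper's final step writes $\wmc(\alpha, w)$, which is a slip, but every actual invocation of the lemma in the paper applies it to a formula that already contains $x$ as a conjunct, so the discrepancy is harmless in context.
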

\begin{proof}
  Follows from Lemma~\ref{lem:wmcindconj} and the fact that $\alpha \mid x$ and
  $x$ share no variables by definition:
  \begin{align*}
    \wmc(\alpha \mid x) \times \wmc(x, w)
    =& \wmc((\alpha \mid x) \land x, w) & \text{By Lemma~\ref{lem:wmcindconj}} \\
    =& \wmc(\alpha, w).
  \end{align*}
\end{proof}

\begin{lemma}
  \label{lem:wmccond}
  Let $\alpha$ be a sentence, $x$ be a conjunction of literals, and $w$ be some
weight function. If for all $l \in x$ we have that $w(l) = 1$, then
$\wmc(\alpha \mid x, w) = \wmc(\alpha \land x, w)$. 
\end{lemma}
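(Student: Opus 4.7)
The plan is to leverage Lemma~\ref{lem:wmcprod} together with the hypothesis that every literal in $x$ carries weight $1$. First I would instantiate Lemma~\ref{lem:wmcprod} with the sentence $\alpha \land x$ playing the role of ``$\alpha$,'' conditioning again on $x$. This yields
\[
\wmc(\alpha \land x, w) \;=\; \wmc\bigl((\alpha \land x) \mid x, w\bigr) \times \wmc(x, w).
\]
Next, I would observe the standard simplification $(\alpha \land x) \mid x \equiv \alpha \mid x$: conditioning on $x$ makes the conjunct $x$ trivially true, so the restricted formulas are logically equivalent and have identical model sets over the remaining variables. Hence the right-hand side becomes $\wmc(\alpha \mid x, w) \times \wmc(x, w)$.

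The remaining step is to show $\wmc(x, w) = 1$. Because $x$ is a conjunction of literals, as a Boolean formula it has exactly one model, namely $x$ itself (over the variables that appear in $x$). The weighted model count is therefore the single product $\prod_{l \in x} w(l)$, and this product equals $1$ by the hypothesis that $w(l) = 1$ for every literal $l \in x$. Substituting back gives $\wmc(\alpha \land x, w) = \wmc(\alpha \mid x, w)$, which is the desired equality.

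I do not expect a serious obstacle here; the only subtlety is justifying $(\alpha \land x)\mid x \equiv \alpha \mid x$, which is a routine fact about conditioning a Boolean formula on a consistent conjunction of literals (both sides produce the same reduced formula on the variables outside of $x$). Everything else is a direct application of the prior lemma and arithmetic with the weight-$1$ assumption. An alternative, equally short route would be a direct bijective argument matching each model $\omega'$ of $\alpha \mid x$ (over variables outside $x$) with the model $\omega' \land x$ of $\alpha \land x$, and noting that the extra factor $\prod_{l \in x} w(l) = 1$ preserves weights term-by-term; but going through Lemma~\ref{lem:wmcprod} keeps the proof to two lines.
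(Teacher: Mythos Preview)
Your proposal is correct and mirrors the paper's own proof almost line for line: the paper also instantiates Lemma~\ref{lem:wmcprod} with $\alpha \land x$, uses $(\alpha \land x)\mid x = \alpha \mid x$, and collapses the factor $\wmc(x,w)=1$. Your write-up just spells out the justification for $\wmc(x,w)=1$ a bit more explicitly than the paper does.
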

\begin{proof}
  \begin{align*}
    \wmc(\alpha \land x)
    =& \wmc((\alpha \land x) \mid x, w) \times \underbrace{\wmc(x, w)}_{=1} \\
    =& \wmc(\alpha \mid x, w).
  \end{align*}
\end{proof}

\begin{lemma}[Mutually Exclusive Disjunction]
  \label{lem:wmcmutex}
  Let $\alpha$ and $\beta$ be Boolean be mutually exclusive Boolean sentences
  (i.e., $\alpha \Leftrightarrow \neg \beta$). Then,
for any weight function $w$, $\wmc(\alpha \lor \beta, w) = \wmc(\alpha, w) +
\wmc(\beta, w)$.
\end{lemma}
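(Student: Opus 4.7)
The plan is to mirror the structure of the preceding lemmas in the appendix: unfold the definition of $\wmc$, argue a disjoint partition of the model set, and split the sum. Concretely, I would work from the definition $\wmc(\varphi,w) = \sum_{\omega \in \mods(\varphi)} \prod_{l \in \omega} w(l)$ applied to $\varphi = \alpha \lor \beta$, then identify $\mods(\alpha \lor \beta)$ as the disjoint union $\mods(\alpha) \uplus \mods(\beta)$, which lets the sum split into the two claimed WMC terms.

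The key combinatorial step is the partition. First I would establish $\mods(\alpha \lor \beta) \subseteq \mods(\alpha) \cup \mods(\beta)$: every model $\omega$ of $\alpha \lor \beta$ entails $\alpha$ or $\beta$ by the standard logical semantics of disjunction, and since $\omega$ is a complete truth assignment over the relevant variables it is itself a model of whichever disjunct it entails. Next I would show the reverse inclusion: any model of $\alpha$ (resp.\ $\beta$) is a model of $\alpha \lor \beta$. Finally I would use the hypothesis $\alpha \Leftrightarrow \neg \beta$, which implies $\alpha \land \beta \equiv \bot$, to conclude $\mods(\alpha) \cap \mods(\beta) = \emptyset$, making the union disjoint.

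With the disjoint partition in hand, the computation is a one-line split of the sum, completely analogous to the proof of Lemma~\ref{lem:wmcindconj}:
\begin{align*}
\wmc(\alpha \lor \beta, w) &= \sum_{\omega \in \mods(\alpha) \uplus \mods(\beta)} \prod_{l \in \omega} w(l) \\
&= \sum_{\omega \in \mods(\alpha)} \prod_{l \in \omega} w(l) \; + \; \sum_{\omega \in \mods(\beta)} \prod_{l \in \omega} w(l) \\
&= \wmc(\alpha, w) + \wmc(\beta, w).
\end{align*}

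The main obstacle is the bookkeeping around variable sets: $\mods(\varphi)$ is defined relative to the variables that occur in $\varphi$, so a priori $\mods(\alpha \lor \beta)$, $\mods(\alpha)$, and $\mods(\beta)$ may live over different variable sets, and a model of $\alpha \lor \beta$ is not literally a model of $\alpha$ in general. I would handle this in the same manner the earlier lemmas implicitly do — either by stipulating that $\alpha$ and $\beta$ share the same variable set (a natural convention given $\alpha \Leftrightarrow \neg \beta$), or by projecting each $\omega \in \mods(\alpha \lor \beta)$ onto the variables of the disjunct it satisfies and arguing that the literal-product decomposes correctly because the remaining literals occur in both sides of the partition. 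Once this technical alignment is in place, the partition-and-split argument above yields the result.
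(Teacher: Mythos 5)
Your proof is correct and follows essentially the same route as the paper's: partition the models of $\alpha \lor \beta$ into those entailing $\alpha$ and those entailing $\beta$ (disjoint by mutual exclusivity) and split the sum. Your treatment is in fact slightly more careful than the paper's, both in addressing the variable-set bookkeeping and in writing the split as a sum of two sums rather than the paper's (apparently typographical) nested double sum.
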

\begin{proof}
    The proof relies on the fact that, if two sentences $\alpha$ and $\beta$ are
    mutually exclusive, then any model $\omega$ of $\alpha \lor \beta$ either
    entails $\alpha$ or entails $\beta$. We denote the set of models which
    entail $\alpha$ as $\Omega_\alpha$, and the set of models which entail
    $\beta$ as $\Omega_\beta$. Then,
    \begin{align*}
      \wmc(\alpha \lor \beta, w)
      =& \sum_{\omega_\alpha \in \Omega_\alpha} \sum_{\omega_\beta \in \Omega_\beta} \prod_{l \in \omega_\alpha} w(l)
      \prod_{l \in \omega_\beta}w(l)\\
      =& \wmc(\alpha, w) + \wmc(\beta, w).
    \end{align*}
\end{proof}

The following notion of functional dependency will be necessary for reasoning about
the compilation of the composition:

\begin{definition}[Functionally dependent WBF]
   Let $(\alpha, w)$ be a WBF, and let $\mathbf{X}$ and $\mathbf{Y}$ be two
variable sets which partition the variables in $\alpha$. Then we say that
$\mathbf{X}$ is \emph{functionally dependent on $\mathbf{Y}$ for $\alpha$} if for any total
assignment to variables in $\mathbf{Y}$, labeled $y$, there is at most one total assignment
to variables in $\mathbf{X}$, labeled $x$, such that $x \land y \models \alpha$.
\end{definition}

\begin{lemma}[Functionally Dependent Existential Quantification]
  \label{lem:wmcdet}
  Let $(\alpha, w)$ be a WBF with variable partition $\mathbf{X}$ and
$\mathbf{Y}$ such that $\mathbf{X}$ is functionally dependent on $\mathbf{Y}$
for $\alpha$.
Furthermore, assume that for any conjunction of literals $x$ formed from
$\mathbf{X}$, $\wmc(x) = 1$. Then,
$\wmc(\alpha) = \wmc(\exists \{x_i \in \mathbf{X}\} . \alpha)$.
\end{lemma}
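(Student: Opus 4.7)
The plan is to reduce both sides to sums over total assignments of $\mathbf{Y}$ and show the summands agree term-by-term.

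First I would expand $\wmc(\alpha, w)$ according to the definition as a sum $\sum_{\omega \in \mods(\alpha)} \prod_{l \in \omega} w(l)$. Because $\mathbf{X}$ and $\mathbf{Y}$ partition the variables of $\alpha$, every model $\omega$ decomposes uniquely as a conjunction $\omega = x \wedge y$, where $x$ is a total assignment to $\mathbf{X}$ and $y$ is a total assignment to $\mathbf{Y}$. This lets me split the product of weights as $\prod_{l \in x} w(l) \cdot \prod_{l \in y} w(l)$, and the hypothesis that $\wmc(x,w) = 1$ for every such conjunction of $\mathbf{X}$-literals collapses the first factor to $1$.

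Next I would use functional dependency to turn the double sum over $(x, y)$ into a single sum over $y$. By definition, for each total assignment $y$ to $\mathbf{Y}$ there is at most one $x$ with $x \wedge y \models \alpha$; the set of $y$'s for which such an $x$ exists is exactly $\mods(\exists \{x_i \in \mathbf{X}\} . \alpha)$, since that existential quantification projects $\alpha$ onto the variables $\mathbf{Y}$. So the map $\omega = x \wedge y \mapsto y$ is a bijection between $\mods(\alpha)$ and $\mods(\exists \mathbf{X}.\alpha)$, and under this bijection the weights agree. Summing gives $\wmc(\alpha, w) = \sum_{y \in \mods(\exists \mathbf{X}.\alpha)} \prod_{l \in y} w(l) = \wmc(\exists \mathbf{X}.\alpha, w)$.

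The only real obstacle is making the bookkeeping rigorous: the weight function $w$ is defined on literals over $\mathbf{X} \cup \mathbf{Y}$ but the existentially quantified formula only has variables in $\mathbf{Y}$, so I should be explicit that models of $\exists \mathbf{X}. \alpha$ are conjunctions of literals in $\mathbf{Y}$ only, and that the weight under $w$ restricted to $\mathbf{Y}$-literals gives the correct contribution. Once that is clear, each step is a routine application of the definition of $\wmc$ and the functional-dependency hypothesis, and no appeal to Lemmas \ref{lem:wmcindconj}--\ref{lem:wmcmutex} is strictly needed beyond the decomposition of a model into its $\mathbf{X}$- and $\mathbf{Y}$-parts.
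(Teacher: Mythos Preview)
Your proof is correct and takes a genuinely different route from the paper's. The paper argues one variable at a time: for a single $x \in \mathbf{X}$ it writes $\exists x.\,\alpha = (\alpha \mid x) \lor (\alpha \mid \neg x)$, invokes Lemma~\ref{lem:wmcmutex} (this is where functional dependency is silently used, to make the two disjuncts share no models), then uses Lemma~\ref{lem:wmcprod} together with the weight-$1$ hypothesis to rewrite each summand as $\wmc(\alpha \land x)$ and $\wmc(\alpha \land \neg x)$, and re-combines via mutual exclusivity to recover $\wmc(\alpha)$; the multi-variable case is then asserted to follow by iteration. Your argument instead works directly from the definition of $\wmc$: you exhibit a weight-preserving bijection $\mods(\alpha) \to \mods(\exists\mathbf{X}.\alpha)$ via the projection $x \wedge y \mapsto y$, with functional dependency supplying injectivity and the weight-$1$ hypothesis on $\mathbf{X}$-literals ensuring the weights match. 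Your version is more elementary (no appeal to Lemmas~\ref{lem:wmcindconj}--\ref{lem:wmcmutex}), handles all of $\mathbf{X}$ in one shot rather than inductively, and makes the role of functional dependency explicit; the paper's version reuses its earlier algebraic machinery but leaves the reader to supply why $(\alpha \mid x)$ and $(\alpha \mid \neg x)$ are mutually exclusive. Your caveat about the domain of $w$ versus the variable set of $\exists\mathbf{X}.\alpha$ is the right thing to flag and is easily discharged as you indicate.
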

\begin{proof}
  The proof follows from Lemma~\ref{lem:wmccond} and Lemma~\ref{lem:wmcmutex}.
  First, let $\mathbf{X} = x$ be a single variable, and assume all weighted
  model counts are performed with the weight function $w$. Then,
  \begin{align*}
    \wmc(\exists x .\alpha)
    =& \wmc((\alpha \mid x) \lor (\alpha \mid \neg x)) \\
    =& \wmc(\alpha \mid x) + \wmc(\alpha \mid \neg x) & \text{By mutual exclusion} \\
    =& \underbrace{\frac{1}{\wmc(x)}}_{=1}\wmc(\alpha \land x) +
       \underbrace{\frac{1}{\wmc(\neg x)}}_{=1} \wmc(\alpha \land \neg x) \\
    =& \wmc(\alpha \land x) + \wmc(\alpha \land \neg x) \\
    =& \wmc((\alpha \land x) \lor (\alpha \land \neg x)) \\
    =& \wmc(\alpha) & \text{By mutual exclusion}
  \end{align*}
  This technique easily generalizes to when $\mathbf{X}$ is a set of variables
  instead of a single variable.
\end{proof}

\subsection{Main Proof}

Let $\sigma$ and $\sigma'$ be an input and output state, let $V$ be the set of
variables in the entire program. The proof will proceed by induction on terms.
We prove the following inductive base cases for terms which are not defined
inductively.

\subsubsection{Base Cases}
  
\paragraph{Skip}
First, we show that the accepting semantics correspond. For any $\sigma$, we
have that $\dbracket{\texttt{skip}}_A(\sigma) = \wmc(\gamma(V) \land
\form{\sigma}) = 1$, since there is only a single satisfying assignment, which
has weight $1$. Now, we show that the transition semantics correspond:
\begin{itemize}[leftmargin=*]
\item Assume $\sigma' = \sigma$. Then,
  \begin{align*}
    \dbracket{\ts}_T(\sigma' \mid \sigma)
    =&\frac{\wmc(\gamma(V) \land \form{\sigma} \land
\formp{\sigma'}, \delta(V))}{\wmc(\gamma(V) \land
       \form{\sigma})}
    = 1
  \end{align*}
  since we have a single model in both numerator and denominator, both having weight $1$.
\item  Assume $\sigma \ne \sigma'$. Then:
    \begin{align*}
    \dbracket{\ts}_T(\sigma' \mid \sigma)
    =&\frac{\wmc(\gamma(V) \land \form{\sigma} \land
\formp{\sigma'}, \delta(V))}{\wmc(\gamma(V) \land
       \form{\sigma})}
    = 0
  \end{align*}
  since the numerator counts models of an unsatisfiable sentence.
\end{itemize} 

\paragraph{Sample}
Let $\varphi$ and $w$ be defined as in the symbolic compilation rules.
First we show that the accepting semantics correspond.
\begin{align*}
 \dbracket{x\sim\texttt{flip}(\theta)}_A(\sigma)
  =& \wmc((x' \Leftrightarrow f) \land \gamma(V \setminus \{x\}) \land \form{\sigma},
     w) \\
  =& \underbrace{\wmc(x' \Leftrightarrow f \mid \gamma(V \setminus \{x\}) \land \form{\sigma}, w)}
     _{=\theta+(1-\theta)=1}\times
     \underbrace{\wmc(\gamma(V \setminus \{x\}) \land \form{\sigma}, w)}_
   {=1, \text{ by def. of $\delta$}}& \text{By Lemma~\ref{lem:wmccond}} \\
  =& 1
\end{align*}

Now we show that the transition semantics correspond:
\begin{align*}
  \dbracket{{x\sim\texttt{flip}(\theta)}}_T(\sigma' \mid \sigma) 
  =&\wmc(\varphi \land \form{\sigma} \land \formp{\sigma'},w) \times \underbrace{\frac{1}{\wmc(\varphi \land \form{\sigma})}}_{=1}\\
  =&\wmc(\underbrace{\varphi \mid \form{\sigma} \land \formp{\sigma'}}_\alpha,w) \times 
    \underbrace{\wmc(\form{\sigma} \land \formp{\sigma'}}_{=1},w)
\end{align*}
We can observe the following about $\alpha$:
\begin{itemize}
\item If $\sigma = \sigma'[x \mapsto \true]$, then $\wmc(\varphi \land
  \form{\sigma} \land \formp{\sigma'},w) = \theta$.
\item If $\sigma = \sigma'[x \mapsto \false]$, then $\wmc(\varphi \land
  \form{\sigma} \land \formp{\sigma'},w) = 1-\theta$.
\item If $\sigma \ne \sigma'[x \mapsto \false]$ and
  $\sigma \ne \sigma'[x \mapsto \true]$ , $\alpha = \false$, so the weighted
  model count is 0.
\end{itemize}

\paragraph{Assignment} First we show that the accepting semantics correspond:
\begin{align*}
   \dbracket{x:=\te}_A(\sigma)
  =& \wmc((x' \Leftrightarrow \dbracket{\te}_\mathcal{S}) \land \gamma(V \setminus \{x\}) \land
     \form{\sigma}, w) \\
  =& 1,
\end{align*}
since there is exactly a single model and its weight is 1. Now, we show that the
transition semantics correspond:
\begin{align*}
  \dbracket{x:=\te}_T(\sigma' \mid \sigma)
  =& \wmc(\underbrace{(x' \Leftrightarrow \dbracket{\te}_\mathcal{S}) \land \gamma(V \setminus \{x\}) \land
     \form{\sigma} \land \form{\sigma'}}_\alpha, w)
     \times \underbrace{\frac{1}{\wmc((x' \Leftrightarrow
     \dbracket{\te}_\mathcal{S}) \land \gamma(V \setminus \{x\}) \land
     \form{\sigma})}}_{=1}
\end{align*}

\begin{itemize}
\item Assume $\sigma = \sigma'[x \mapsto \dbracket{\te}(\sigma)]$. Then,
  $\alpha$ has a single model, and the weight of that model is 1, so
  $\wmc(\alpha, w) = 1$.
\item Assume $\sigma \ne \sigma'[x \mapsto \dbracket{\te}(\sigma)]$. Then,
  $\alpha$ is unsatisfiable, so $\wmc(\alpha, w) = 0$.
\end{itemize}

\paragraph{Observe} First we prove the transition semantics correspond:
\begin{align*}
  \dbracket{\texttt{observe}(\te)}_A(\sigma)
  =& \wmc(\dbracket{\te}_\sym \land \gamma(V) \land \form{\sigma}, w) \\
  =&
     \begin{cases}
       1 \quad& \text{if } \form{\sigma} \models \dbracket{\ts}_\sym \\
       0 & \text{otherwise}
     \end{cases}
\end{align*}

Now, we can prove that the transition semantics correspond:

\begin{align*}
  \dbracket{\texttt{observe}(\te)}_T(\sigma' \mid \sigma)
  =& \wmc(\underbrace{\dbracket{\te}_\sym \land \gamma(V) \land \form{\sigma} \land \formp{\sigma'}}_\alpha, w) \times
     \frac{1}{\wmc(\underbrace{\dbracket{\te}_\sym \land \gamma(V) \land \form{\sigma}}_\beta, w)}
\end{align*}

We treat a fraction $\frac{0}{0}$ as $0$. Then, we can apply case analysis:
\begin{itemize}
\item Assume $\sigma = \sigma'$ and $\dbracket{\te}(\sigma) = \true$. Then,
  both $\alpha$ and $\beta$ have a single model with weight 1, so
  $\dbracket{\texttt{observe}(\te)}_T(\sigma' \mid \sigma) = 1$.
\item Assume $\sigma \ne \sigma'$ or $\dbracket{\te}(\sigma) \ne \true$. Then,
  either $\form{\sigma} \land \dbracket{\te}_\sym \models \false$ or $\gamma(V)
  \land \form{\sigma} \land \formp{\sigma'} \models \false$; in either case,
  $\dbracket{\texttt{observe}(\te)}_T(\sigma' \mid \sigma) = 0$.
\end{itemize}

\subsubsection{Inductive Step}
Now, we utilize the inductive hypothesis to prove the theorem for the
inductively-defined terms. Formally, let $\ts$ be a \dippl{} term, let $\{\ts_i\}$
be sub-terms of $\ts$.
Then, our inductive hypothesis states that for each sub-term $\ts_i$ of $\ts$,
where $\ts_i \rightsquigarrow (\varphi, w)$, we have that for any two states
$\sigma, \sigma'$, $\dbracket{\ts_i}_T(\sigma' \mid \sigma) = \dbracket{(\varphi
  ,w)}_T(\sigma') \mid \sigma)$ and
$\dbracket{\ts_i}_A(\sigma) = \dbracket{(\varphi, w)}_A(\form{\sigma})$. Then,
we must show that the theorem holds for $\ts$ using this hypothesis.

\begin{remark}
  For the inductively defined compilation semantics, the weight function $w =
w_1 \uplus w_2$ is a unique and well-defined weight function, since the only
source of weighted variables is from a \texttt{flip} term, which only assigns a
weight to fresh variables; thus, there can never be a disagreement between the
two weight functions $w_1$ and $w_2$ about the weight of a particular variable.
\end{remark}

\paragraph{Composition} Let $\varphi, w, \varphi_1, \varphi_2, \varphi_2', w_1,$ and
$w_2$ be defined as in the symbolic compilation rules. By the inductive
hypothesis, we have that the theorem holds for $(\varphi_1, w_1)$ and
$(\varphi_2, w_2)$. We observe that the weighted model counts of $\varphi_2$ are
invariant under relabelings. I.e., for any states $\sigma, \sigma', \sigma''$:
\begin{align*}
  \wmc(\varphi_2 \land \form{\sigma}) =& \wmc(\varphi_2' \land \formp{\sigma}) \\
  \wmc(\varphi_2 \land \form{\sigma} \land \form{\sigma'})
  =& \wmc(\varphi_2' \land \formp{\sigma} \land \formpp{\sigma'})
\end{align*}
where $\formpp{\cdot}$ generates double-primed state variables. Now we show that
the WBF compilation has the correct accepting semantics, where each weighted
model count implicitly utilizes the weight function $w$:

\begin{align*}
  \dbracket{\ts_1; \ts_2}_A(\sigma) =&
    \dbracket{\ts_1}_A(\sigma) \times
    \sum_{\tau \in \Sigma} \left(  \dbracket{\ts_1}_T  (\tau \mid \sigma) \times \dbracket{\ts_2}_A(\tau) \right) \\
   =& Z \times \sum_\tau \frac{\wmc(\varphi_1 \land \form{\sigma} \land \formp{\tau},
     w)}{Z} \times
      \wmc(\varphi_2' \land \formp{\tau}) \quad\text{ where } Z = \wmc(\varphi_1 \land \form{\sigma}) \\
  =& \sum_\tau \wmc(\varphi_1 \land \form{\sigma} \land \formp{\tau}) \times
     \wmc(\varphi_2' \land \formp{\tau}) \\
  =& \sum_\tau \wmc(\varphi_1 \land \form{\sigma} \mid \formp{\tau}) \times 
     \wmc(\varphi_2' \mid \formp{\tau}) \times
     \underbrace{[\wmc(\formp{\tau})]^2}_{=1}
    \tag{By Lemma~\ref{lem:wmcprod}}\\
  =& \sum_\tau\wmc\left(\big[\varphi_1 \land \form{\sigma} \mid \formp{\tau} \big] \land
     \big[\varphi_2'  \mid \formp{\tau} \big]\right)
    \tag{By Lemma~\ref{lem:wmcindconj}}\\
  =& \sum_\tau\wmc\left(\big[\varphi_1 \land \form{\sigma} \land
     \varphi_2'  \mid \formp{\tau}\big] \right)
    \\
  =& \sum_\tau\wmc\left(\varphi_1 \land \form{\sigma} \land
     \varphi_2' \land \formp{\tau}\right) \times \underbrace{\frac{1}{\wmc(\formp{\tau})}}_{=1}
    \tag{By Lemma~\ref{lem:wmcprod}} \\
  \label{eq:step3}
  =& \sum_\tau \wmc(\varphi_1 \land \varphi_2' \land\form{\sigma} \land \formp{\tau}
     )\\
  =& \wmc\left(\bigvee_\tau \varphi_1 \land \varphi_2' \land\form{\sigma} \land \formp{\tau}
     \right) \tag{By Lemma~\ref{lem:wmcmutex}} \\
  =& \wmc\left( \varphi_1 \land \varphi_2' \land\form{\sigma} \land \left[ \bigvee_\tau\formp{\tau} \right]
     \right)   \\
  =& \wmc\left( \varphi_1 \land \varphi_2' \land\form{\sigma}
     \right) \\
  =& \wmc\left(\exists \{x_i'\} .  \varphi_1 \land \varphi_2' \land\form{\sigma}
     \right) \tag{By Lemma~\ref{lem:wmcdet}} \\
  =& \wmc((\exists \{x_i'\} . \varphi_2' \land \form{\sigma}[x'' \mapsto x']).
\end{align*}

Now, we can prove the transition semantics correspond for composition, where all
model counts are implicitly utilizing the weight function $w$:

\begin{align*}
  \dbracket{\ts_1; \ts_2}_T(\sigma' \mid \sigma)
  &= \frac{\sum_{\tau \in \Sigma} \dbracket{\ts_1}_T(\tau \mid \sigma) \times
    \dbracket{\ts_2}_T(\sigma' \mid \tau) \times \dbracket{\ts_2}_A(\tau)}
    {\sum_{\tau \in \Sigma} \dbracket{\ts_1}_T(\tau \mid \sigma) \times \dbracket{\ts_2}_A(\tau) }\\
  &= \frac{\sum_{\tau \in \Sigma} \dbracket{\ts_1}_T(\tau \mid \sigma) \times
    \dbracket{\ts_2}_T(\sigma' \mid \tau) \times \dbracket{\ts_2}_A(\tau)}
    {\frac{1}{\dbracket{\ts_1}_A(\sigma)} \times \underbrace{\dbracket{\ts_1}_A(\sigma) \times
    \sum_{\tau \in \Sigma} \dbracket{\ts_1}_T(\tau \mid \sigma) \times \dbracket{\ts_2}_A(\tau)}_{=\dbracket{\ts_1;\ts_2}_A(\sigma)} }
  \\
  &= \frac{\sum_{\tau \in \Sigma} \frac{\wmc(\varphi_1 \land \form{\sigma} \land \formp{\tau})}{\wmc(\varphi_1 \land \form{\sigma})}
    \times
    \wmc(\varphi_2' \land \formpp{\sigma'} \land \formp{\tau})}
    {\frac{1}{\wmc(\varphi_1 \land \form{\sigma})}\wmc(\varphi \land \form{\sigma})}
  \tag{By inductive hyp.}
  \\
  &= \frac{1}{\wmc(\varphi \land \form{\sigma})} \times
    \sum_{\tau \in \Sigma} \wmc(\varphi_1 \land \form{\sigma} \land \formp{\tau}) \times 
    \wmc(\varphi_2' \land \formpp{\sigma'} \land \formp{\tau})
  \\
  &= \frac{1}{\wmc(\varphi \land \form{\sigma})} \times
    \sum_{\tau \in \Sigma} \wmc(\varphi_1 \land \form{\sigma} \mid \formp{\tau}) \times 
    \wmc(\varphi_2' \land \formpp{\sigma'} \mid \formp{\tau}) \times
    \underbrace{[\wmc(\formp{\tau})]^2}_{=1}
    \tag{By Lemma~\ref{lem:wmccond}}
  \\
  &= \frac{1}{\wmc(\varphi \land \form{\sigma})} \times
    \sum_\tau \wmc\left(\varphi_1 \land \varphi_2' \land \form{\sigma} \land \formpp{\sigma'} \mid \formp{\tau}\right)
  \tag{By Lemma~\ref{lem:wmcprod}}
  \\
  &= \frac{1}{\wmc(\varphi \land \form{\sigma})} \times
    \sum_\tau \wmc\left(\varphi_1 \land \varphi_2' \land \form{\sigma} \land \formpp{\sigma'} \land \formp{\tau}\right)
    \times \frac{1}{\wmc(\formp{\tau})}
  \tag{By Lemma~\ref{lem:wmccond}}
  \\
  &= \frac{1}{\wmc(\varphi \land \form{\sigma})} \times
    \wmc\left(\bigvee_\tau \varphi_1 \land \varphi_2' \land \form{\sigma} \land \formpp{\sigma'} \land \formp{\tau}  \right)
  \tag{By Lemma~\ref{lem:wmcmutex}}
  \\
  &= \frac{1}{\wmc(\varphi \land \form{\sigma})} \times
    \wmc\left(\varphi_1 \land \varphi_2' \land \form{\sigma} \land \formpp{\sigma'} \land \left[\bigvee_\tau \formp{\tau} \right] \right)
  \\
  &= \frac{1}{\wmc(\varphi \land \form{\sigma})} \times
    \wmc\left(\varphi_1 \land \varphi_2' \land \form{\sigma} \land \formpp{\sigma'} \right)
  \\
  &= \frac{\wmc\left(\exists \{x_i'\}. \varphi_1 \land \varphi_2' \land \form{\sigma} \land \formpp{\sigma'} \right)}{\wmc(\exists \{x_i'\}. \varphi_1 \land \varphi_2' \land \form{\sigma})} 
  &
  \tag{By Lemma~\ref{lem:wmcdet}}
\end{align*}

\paragraph{\texttt{if}-statements}
Let $\varphi_1, \varphi_2, w, \varphi$ be defined as in the compilation rules.
First, we prove that the accepting semantics correspond:
\begin{align*}
  \dbracket{\texttt{if}(\te)~\{\ts_1\}~\texttt{else}~\{\ts_2\}}_A(\sigma)
  =& \begin{cases}
    \dbracket{\ts_1}_A(\sigma) ~~& \text{if } \dbracket{\te}(\sigma) = \true \\
    \dbracket{\ts_2}_A(\sigma) ~~& \text{if } \dbracket{\te}(\sigma) = \false \\
  \end{cases} \\
  =& \begin{cases}
    \wmc(\varphi_1 \land \form{\sigma})~~& \text{if } \dbracket{\te}_\sym \land \form{\sigma} \models \true\\
    \wmc(\varphi_2 \land \form{\sigma})~~& \text{otherwise}
     \end{cases} & \text{By Inductive Hyp.}\\
  =&                         
     \wmc((\dbracket{\te}_\sym \land \varphi_1 \land \form{\sigma}) \lor (\neg \dbracket{\te}_\sym \land \varphi_2 \land \form{\sigma}))
                                 & (\dagger) \\
  =& \wmc\left(\big[(\dbracket{\te}_\sym \land \varphi_1) \lor (\neg \dbracket{\te}_\sym \land \varphi_2)\big] \land \form{\sigma}\right)
\end{align*}
Where $(\dagger)$ follows from Lemma~\ref{lem:wmcmutex} and the mutual
exclusivity of $\dbracket{\te}_\sym$ and $\neg\dbracket{\te}_\sym$. Now we can
prove the transition semantics correspond:
\begin{align*}
    \dbracket{\texttt{if}(\te)~\{\ts_1\}~\texttt{else}~\{\ts_2\}}_T(\sigma' \mid \sigma)
  =& \begin{cases}
    \dbracket{\ts_1}_T(\sigma' \mid \sigma) ~~& \text{if } \dbracket{\te}(\sigma) = \true \\
    \dbracket{\ts_2}_T(\sigma' \mid \sigma) ~~& \text{if } \dbracket{\te}(\sigma) = \false \\
  \end{cases} \\
  =& \begin{cases}
    \frac{\wmc(\varphi_1 \land \form{\sigma} \land \formp{\sigma'})}{\wmc(\varphi_1 \land
      \form{\sigma})}~~& \text{if } \dbracket{\te}_\sym \land \form{\sigma} \models
    \true\\
    \frac{\wmc(\varphi_2 \land \form{\sigma} \land \formp{\sigma'})}{\wmc(\varphi_2 \land
      \form{\sigma})}
    ~~& \text{otherwise}
  \end{cases} & \text{By Inductive Hyp.}\\
  =& \frac{\wmc\left(\big[\dbracket{\te}_\sym \land \varphi_1 \land \form{\sigma} \land \formp{\sigma'}\big] \lor
     \big[\neg\dbracket{\te}_\sym \land \varphi_2 \land \form{\sigma} \land \formp{\sigma'} \big]\right)}
     {\wmc\left(\big[(\dbracket{\te}_\sym \land \varphi_1) \lor (\neg
     \dbracket{\te}_\sym \land \varphi_2)\big] \land \form{\sigma}\right)}\\
  =& \frac{\wmc\left(\big[(\dbracket{\te}_\sym \land \varphi_1) \lor
     (\neg\dbracket{\te}_\sym \land \varphi_2) \big] \land \form{\sigma} \land \formp{\sigma'} \right)}
     {\wmc\left(\big[(\dbracket{\te}_\sym \land \varphi_1) \lor (\neg
     \dbracket{\te}_\sym \land \varphi_2)\big] \land \form{\sigma}\right)}
\end{align*}

This concludes the proof.

% Text of appendix \ldots

\end{document}